\newif\iffull
\renewcommand{\thefootnote}{\fnsymbol{footnote}}
\newtheorem{theorem}{Theorem}
\newtheorem{corollary}{Corollary}
\newtheorem{lemma}{Lemma}
\newtheorem{proposition}{Proposition}
\newtheorem{fact}{Fact}
\newtheoremstyle{redstyle}
     {3pt}
     {3pt}
     {\color{black}}
     {}
     {\color{red}\bfseries}
     {:}
     {.5em}
     {}
\theoremstyle{redstyle}
\newcommand{\m}{\mathcal}
\newcommand{\cT}{{\mathcal T}}
\newcommand{\cN}{{\mathcal N}}
\newcommand{\cI}{{\mathcal I}}
\newcommand{\ep}{\varepsilon}
\newcommand{\eps}{\varepsilon}
\newcommand{\DIR}{\text{DIR}}
\newcommand{\dogory}{\vspace*{-3pt}}
\newcommand{\labell}[1]{\label{#1}}
\renewcommand{\paragraph}[1]{\vspace*{0.5ex}\noindent {\bf #1}}
\newcommand{\remove}[1]{}
\newcommand{\dist}{\text{dist}}
\newcommand{\distM}{\text{distM}}
\newcommand{\NAT}{{\mathbb N}}
\newcommand{\INT}{{\mathbb Z}}
\newcommand{\dk}[1]{#1}
\newcommand{\tj}[1]{#1} 
\newcommand{\comment}[1]{}
\newcommand{\boxx}{\text{box}}
\begin{document}

\title{Distributed Deterministic Broadcasting\\ in Uniform-Power Ad Hoc Wireless Networks}

\author{%
    Tomasz Jurdzinski\footnotemark[2]
    \and
    Dariusz R.~Kowalski\footnotemark[3]
    \and
    Grzegorz Stachowiak\footnotemark[2]
}

\footnotetext[2]{Institute of Computer Science, University of Wroc{\l}aw, Poland.}

\footnotetext[3]{Department of Computer Science,
            University of Liverpool,
            Liverpool L69 3BX, UK.
            }


\date{}

\maketitle


\begin{abstract}
Development of many futuristic technologies, such as MANET, VANET, iThings, nano-devices,
depend on efficient distributed communication protocols in multi-hop ad hoc networks.
A vast majority of research in this area focus on design heuristic protocols, and
analyze their performance by simulations on networks generated randomly or obtained
in practical measurements of some (usually small-size) wireless networks.
Moreover, they often assume access to truly random sources, which is often not reasonable in case of
wireless devices.
In this work we use a formal framework to study the problem of broadcasting and its time complexity in
{\em any} two dimensional Euclidean wireless network with uniform transmission powers.
For the analysis, we consider two popular models of ad hoc networks
based on the Signal-to-Interference-and-Noise Ratio (SINR): one with opportunistic links, and
the other with randomly disturbed SINR.
In the former model, we show that one of our algorithms accomplishes broadcasting in $O(D\log^2 n)$
rounds, where $n$ is the number of nodes and $D$ is the diameter of the network.
If nodes know a priori the granularity $g$ of the network, i.e., the inverse of the maximum transmission range
over the minimum distance between any two stations, a modification of this algorithm
accomplishes broadcasting in $O(D\log g)$ rounds.
Finally, we modify both algorithms to make them efficient in the latter model with randomly disturbed SINR,
with only logarithmic growth of performance.
Ours are the first provably efficient and well-scalable, under the two models,
distributed deterministic solutions for the broadcast task.
%
\comment{
The Signal-to-Interference-and-Noise-Ratio model (SINR) is currently the most popular model for
analyzing communication in wireless networks. Roughly speaking, it allows receiving a message
if the strength of the signal carrying the message dominates over the
combined strength of the remaining signals and the background noise at the receiver.
There is a large volume of analysis done under the SINR model in the centralized setting,
when both network topology and communication tasks are provided as a part of the common input,
but surprisingly not much is known
in ad hoc setting, when nodes have very limited knowledge about the network topology.
In particular, there is no theoretical study of deterministic solutions to multi-hop
communication tasks, i.e., tasks in which packets often have to be relayed in order to reach their destinations.
These kinds of problems, including broadcasting, routing, group communication, leader election,
and many others, are important from perspective of development of
future multi-hop wireless and mobile technologies, such as MANET, VANET, Internet of Things.

\tj{In this paper... efficient randomized algorithms
\begin{itemize}
\item
with density knowledge: $O(D\log n)$; zaleznosc od $\eps$ dodac;
\item
ad-hoc: $O(D\log^2 n)$; zaleznosc od $\eps$ dodac;
\end{itemize}
and ad-hoc deterministic with asymptotic complexity $O(\frac{D\log^2 N}{\eps^2})$.
}
}

\vspace*{1ex}
\noindent
{\bf Keywords:} Ad hoc wireless networks, Uniform-power networks,
Signal-to-Interference-and-Noise-Ratio (SINR) models,
Broadcast problem, Local leader election, Distributed algorithms.

\end{abstract}
\ \\

\thispagestyle{empty}
\setcounter{page}{0}


\renewcommand{\thefootnote}{\arabic{footnote}}

\newpage

\section{Introduction}

In this work we consider a broadcast problem in ad-hoc wireless networks under the
Signal-to-Interference-and-Noise-Ratio (SINR) models.
Wireless network consists of at most $n$ stations, also called nodes, with unique integer IDs
and uniform transmission powers $P$, deployed in the two-dimensional space with Euclidean metric.
Each station initially knows only its own ID, location and
the upper bound $n$ on the number of nodes.

There are three dominating approaches in the literature to analyze performance of
communication protocols
in wireless networks based on the Signal-to-Interference-and-Noise-Ratio.
The first model is based on the {\em SINR with random disturbances}: each measured SINR
is randomly changed according to some stochastic distribution~\cite{losowy-szum}.
This model takes into account all signal disturbances that may occur in the environment,
apart of the average background noise and signal deterioration captured by the
deterministic SINR formula.

The second model is based on the notion of communication graph, which contains
only those communication links that are between stations of distance $(1-\ep)$
times the maximum transmission range. Even though communication between
two nodes not directly connected in such communication graph may occur in practice,
it is very unlikely especially if relatively many nodes intend to transmit.
Therefore, intuitively, the algorithm may rely only on the links in the communication graph,
and in order to be fairly treated, its performance should be analyzed as if only those links
were available. For example, the diameter of the communication graph is a natural
lower bound on broadcasting problem, even though in practice there might exist
shorter paths (but being very difficult to propagate any message along themselves, due to
large distances and substantial signal deterioration). We call this setting
the {\em model with opportunistic links}, c.f.,~\cite{YuHWTL12}.

The third model is based on additional assumption that in order to receive a message,
not only the SINR must be above some (relatively high) threshold, but also
the received dominating signal must be sufficiently large.
We call this setting
the {\em SINR model with weak devices}, c.f.,~\cite{JKS-arxiv,KV10}.

%
%
%
In this work we focus only on the first two models.
We consider two settings: one with no local knowledge being provided a priori to the nodes,
and the other where each node knows  the granularity $g$ of the network,
i.e., the inverse of the maximum transmission range
over the minimum distance between any two stations.

In the broadcast problem considered in this work, there is one designated node, called a source, which has a piece
of information, called a source message or a broadcast message,
which must be delivered to all other accessible (not necessarily directly)
nodes by using wireless communication.
In the beginning, only the source is executing the broadcast protocol,
and the other nodes join the execution after receiving the broadcast message for the first time.
The goal is to minimize time needed for accomplishing the broadcast~task.

\vspace*{-1ex}
\subsection{Previous and Related Results}

In this work, we study
the performance of {\em distributed deterministic broadcasting} in ad hoc wireless networks
under the two SINR-based physical models mentioned above.
%
%
In what follows, we discuss most relevant results in the SINR-based models, and
the state of the art obtained in the older Radio Network model.

\paragraph{SINR models.}
In the 
SINR model \dk{with opportunistic links},
slightly weaker task of {\em local} broadcasting in ad hoc setting, in which nodes have to inform
only their neighbors in the corresponding communication graph,
was studied in \cite{YuWHL11}.
The considered setting allowed power control by deterministic algorithms,
in which, in order to avoid collisions,
stations could transmit with any power smaller than the maximal one.
Randomized solutions for contention resolution~\cite{KV10}
and local broadcasting~\cite{GoussevskaiaMW08} were also obtained.
Recently, a distributed randomized algorithm for multi-broadcast has been
presented \cite{YuHWTL12} for uniform networks. Although the problem solved in that paper
is a generalization
of broadcast, the presented solution is restricted merely to networks having
the communication graph connected for $\eps=\frac23 r$, where $r$
is the largest possible SINR ratio. In contrast, our solutions are efficient and scalable
for {\em any} networks with communication graph connected for {\em any} value of $\eps<\frac{1}{2}$.
\dk{(In case of $\ep\in [1/2,1)$, one could take our algorithm for $\ep'=1/3$,
which guarantees at least as good asymptotic performance.)}


\dk{In the SINR model with random disturbances, motivated by many practical works, c.f.,~\cite{losowy-szum}, we are not aware of any theoretical analysis
of distributed deterministic broadcasting problem.}

\dk{In the model of weak devices, c.f.,~\cite{KV10} studying randomized local broadcast,
broadcasting algorithms are not
scalable (in terms of network diameter), unless nodes know their neighbors in the
corresponding communication graph in advance~\cite{JKS-arxiv}. This is a fundamental difference between
the models considered in this paper, which do not impose any additional physical constraints
on receiving devices apart of the SINR threshold, and the model of weak devices which cannot
decode weak signals.
On the positive side, a scalable (in terms of the maximum node degree) distributed deterministic
construction of efficient backbone sub-network was showed in~\cite{JK-disc12}.
Once such a network is spanned, scalable (in terms of the diameter of the original network)
distributed
solutions to many communication tasks can be constructed.
}

There is a vast amount of work on centralized algorithms under the classical SINR models.
The most studied problems include connectivity, capacity maximization,
link scheduling types of problems;
for recent results and references we refer the reader to the survey~\cite{WatSurv}.

\paragraph{Radio network model.}
There are several papers analyzing broadcasting in the radio model of wireless networks,
under which a message is successfully heard if there are no other simultaneous transmissions
from the {\em neighbors} of the receiver in the communication graph.
This model does not take into account the real strength of the received signals, and also the signals
from outside of some close proximity.
In the {\em geometric} ad hoc setting, Dessmark and Pelc~\cite{DessmarkP07} were the first who studied
the broadcast problem.
They analyzed the impact of local knowledge, defined as a range within which
stations can discover the nearby stations.
Emek et al.~\cite{EmekGKPPS09} designed a broadcast algorithm
working in time $O(Dg)$
in UDG radio networks with eccentricity $D$ and granularity $g$, where
eccentricity was defined as the minimum number of hops to propagate the broadcast message throughout
the whole network.
Later, Emek et al.~\cite{EmekKP08} developed a matching lower bound $\Omega(Dg)$.
Mobility aspects of communication were studied in \cite{Farach-ColtonM07}.
There were several works analyzing deterministic broadcasting in geometric graphs in the centralized radio setting,
c.f.,~\cite{GasieniecKLW08,SenH96}.

The problem of broadcasting is well-studied in the setting of {\em graph radio model}, in which stations
are not necessarily deployed in a metric space;
here we restrict to only the most relevant results.
In deterministic ad hoc setting with no local knowledge, the fastest $O(n\log(n/D))$-time algorithm in symmetric networks was developed by Kowalski~\cite{Kow-PODC-05}, and almost matching lower
bound was given by Kowalski and Pelc~\cite{KP-DC-05}.
For recent results and references in less related settings we refer the reader
to~\cite{DeMarco-SICOMP-10, KP-DC-05,Censor-HillelGKLN11}
%
There is also a vast literature on randomized algorithms for broadcasting in graph radio model
\cite{KushilevitzM98,KP-DC-05,CzumajRytter-FOCS-03}.


\vspace*{-1.2ex}
\subsection{Our Results}

\dk{
In this paper we present distributed deterministic algorithms for broadcasting in
ad hoc wireless networks of uniform transmission power, deployed in two-dimensional Euclidean space.
The time performance of these protocols is measured in two SINR-based models:
with opportunistic links and with random disturbances.
%

In the former model,
when no knowledge of the network topology is a priori provided to the nodes,
except of the upper bound $n$ on the number of nodes,
one of our algorithms works in $O(D\log^2 n)$ rounds.
%
A variation of this protocol accomplishes broadcast in
$O(D\log g)$ rounds
in case when nodes know the
network granularity before the computation.
(It is sufficient that only the source a priori knows network granularity.)}

We show that our algorithms can be easily transformed to achieve similar performance,
bigger by factor $O(\log n)$, in the latter model, with high probability
(i.e., with probability at least $1-n^{-c}$, for some suitable constant $c>1$).
Another useful property that could be almost immediately derived from this transformation
is that nodes do not need to know their exact positions, but only their estimates ---
this inaccuracy could be overcome by setting a slightly smaller deviation parameter $\eta$
of the stochastic distribution of random disturbances (although this may in turn result in
increasing the error probability $\zeta$ of deviating SINR by factor outside of the range
$(1-\eta,1+\eta)$, the asymptotic performances would still remain the same with respect to
parameters $n,D,g$).

Our approach is based on propagating the source message to locally and online elected leaders of nearby boxes
first, and then to the remaining nodes in those boxes.
The main challenge in this process is the lack of knowledge about neighbor location.
We solve it through a cascade of diluted transmissions, each initiated by already elected
nearby temporary leaders who try to eliminate other leaders in close proximity.
This size of this proximity is exponentially increasing in the cascade of these elimination processes, so that
at the end only a few nearby leaders in reasonably large distance (to assure a long ``hop'' of
the source message) survive and are used as relays.
In case the network granularity is unknown,
strongly selective families of specifically selected parameters are used in elimination process.
Subtle technical issues need to be solved to avoid simultaneous transmissions of many
nodes in one region, as it not only disturbs local receivers but may also interfere with faraway
transmissions (recall that in case of weak devices, it is not possible to guarantee
such property, as there is no scalable broadcasting algorithm).
Once all local leaders possess the source message, it is simultaneously propagated to their
neighbors in boxes in a sequence of diluted transmissions.

\comment{%
In the former model, we develop a randomized broadcasting
algorithm with time complexity $O(D+\log (1/\delta))$, where $D$ is the eccentricity of
the communication graph, and $\delta$ is the maximal error probability.
This analysis is complemented by the results of simulations on uniform and social networks,
which compare favorably with the performance of exponential backoff protocol.
In the latter model, we give a solution with time complexity $O((D+\log (1/\delta))\log n)$.
Moreover, a deterministic algorithm in the latter model is presented, working in
only slightly larger time $O(D\log^2 n)$.
All these results hold for model parameter $\alpha>2$;
for $\alpha=2$ the randomized solutions are slower by factor $\log^2 n$ and the deterministic one
becomes slower as well.
Additionally, we provide a variation of the deterministic algorithm that works in
$O(D)$ rounds for $\alpha>2$ and in $O(D\log n)$ rounds for $\alpha=2$,
assuming stations know the (constant) network granularity.
Some technical details are deferred to the full version of this paper,
due to the page limitation.
}

\vspace*{-1ex}
\section{Model, Notation and Technical Preliminaries}

We consider a wireless network of $n$ {\em stations}, also called {\em nodes},
deployed into a two dimensional Euclidean space.
Stations communicate by using a (single-frequency) wireless channel.
They have unique integer IDs in set $[\cI]$,
where the size of the domain $\cI$ is bounded by some polynomial in $n$.
(We use the notation $[i,j]=\{k\in\NAT\,|\,i\leq k\leq j\}$ and $[i]=[1,i]$,
for any two positive integers $i,j$.)
Stations are located on the plane with {\em Euclidean metric} $\dist(\cdot,\cdot)$,
and each station knows its Euclidean coordinates.
Each station $v$ has its {\em fixed transmission power} $P_v$, which is a positive real number;
whenever station $v$ chooses to transmit a message, it uses
its full transmission power $P_v$.
In this work we consider a uniform transmission power setting in which $P_v=P$, for
some fixed $P>0$ and every station $v$.
There are three fixed model parameters, related to the physical nature of wireless medium
and devices:
path loss
$\alpha>2$,
threshold $\beta\ge 1$, and ambient noise $\cN\ge 1$.
The $SINR(v,u,\cT)$ ratio, for given stations $u,v$ and a set of (transmitting) stations $\cT$,
is defined as follows:
\begin{equation}\label{e:sinr}
SINR(v,u,\cT)
=
\frac{P_v\dist(v,u)^{-\alpha}}{\cN+\sum_{w\in\cT\setminus\{v\}}P_w\dist(w,u)^{-\alpha}}
\end{equation}
%
In the {\em classical Signal-to-Interference-and-Noise-Ratio (SINR) model},
station $u$ successfully receives a message from station $v$ in a round if
$v\in \cT$, $u\notin \cT$, and

\vspace*{-2ex}
$$SINR(v,u,\cT)\ge\beta \ ,$$

\vspace*{-1ex}
\noindent
where $\cT$ is the set of stations transmitting at that round.

However, in practice the above SINR-based
condition is too simplistic to capture the complexity of the environment,
especially in case of ad hoc networks~\cite{losowy-szum}.
In this work, we consider two enhanced versions of the classical SINR model,
well-established in the literature:
the SINR model with opportunistic links, and
the SINR model with random disturbances.
In the former model, the SINR ratio is used to decide about successful message delivery,
however some links (between faraway nodes) are not taken into account in progress analysis.
In the latter model, each SINR ratio is modified by some random factor.
In this work we consider a general setting with no restrictions on independence of these random
disturbances over nodes, nor their specific distributions.
The only two assumptions made are:
(i) each random factor is in the interval $(1-\eta,1+\eta)$ with probability at least $1-\zeta$,
for some constant parameters $\eta,\zeta\in (0,1)$, and
(ii) for any given pair of nodes, disturbances of the SINR for these two nodes are independent over rounds.

In order to specify the details of broadcasting task and performance analysis in both models,
we first need to introduce the notion of transmission ranges and communication graphs.
\remove{ 
As the first of the above
conditions is a standard formula defining SINR model in the literature, the second condition
is less obvious. Informally, it states that reception of a message at a station $v$ is possible
only if the power received by $u$ is at least $(1+\eps)$ times larger than the minimum power
needed to deal with ambient noise. This assumption is quite common in the literature
(c.f.,\ \cite{KV10}), for two reasons.
First, it captures the case when the ambient noise, which in practice is of random nature,
may vary by factor $\eps$ from its mean value $\cN$ (which holds with some meaningful
probability).
Second, the lack of this assumption trivializes many communication tasks; for example,
in case of the broadcasting problem, the lack of this assumption implies
a trivial lower bound $\Omega(n)$ on time complexity, even for shallow network
topologies of eccentricity
$O(\sqrt{n})$ (i.e., of $O(\sqrt{n})$ hops) and for centralized and randomized algorithms.\footnote{%
Indeed, assume that we have a network whose all vertices
form a grid
$\sqrt{n}\times \sqrt{n}$ such that $P_v=1$ for each station $v$ and
distances between consecutive elements of the grid
are $(\beta\cdot\cN)^{-1/\alpha}$; that is, the power of the signal received by each
station is at most equal to the ambient noise.
If the constraint
$P_v\dist^{-\alpha}(v,u)\geq (1+\eps)\beta\cN$ is not required for reception
of the message, the source message can still be sent to each station of the network. However,
if more than one station is sending a message simultaneously, no station in the
network receives a message.
}
} 


\paragraph{Ranges and uniformity.}
The {\em communication range} $r_v$ of a station $v$ is the radius of the ball in which a message transmitted
by the station is heard, provided no other station transmits at the same time. That is, $r_v$ is the largest
value such that $SINR(v,u,\cT)\geq \beta$, provided $\cT=\{v\}$ and $d(v,u)=r_v$.
%
As mentioned before, in this paper we consider uniform networks, i.e.,
when ranges
of all stations are equal (to some constant $P$).
Thus, $r_v=r$ for $r=\left(\frac{P}{\beta\cN}\right)^{1/\alpha}$ and each station $v$.
\tj{For simplicity of presentation and wlog,} we assume that $r=1$,
which implies that $P=\beta \cN$.
The assumption that $r=1$ can be dropped without changing
asymptotic formulas for presented algorithms and lower bounds.

\remove{
Under these assumptions, $r_v=r=(1+\eps)^{-1/\alpha}$
for each station $v$.
The {\em range area} of a station with range $r$
located at the point $(x,y)$ is defined as the circle with radius $r$.
}



\paragraph{Communication graph and graph notation.}
The {\em communication graph} $G(V,E)$
of a given network
consists of all network nodes and edges $(v,u)$ such that $d(v,u)\leq (1-\eps)r=1-\eps$,
\tj{where $\eps<1$} is
a fixed model parameter.
The meaning of the communication graph is as follows: even though the idealistic communication
range is $r$, it may be reached only in a very unrealistic case of single transmission in the whole
network. In practice, however, many nodes located in different parts of the network often
transmit simultaneously, and therefore it is reasonable to assume that we may
hope for a slightly smaller range to be achieved.
The communication graph envisions the network of such ``reasonable reachability''.
Observe that the communication graph is symmetric for uniform networks, which are considered
in this paper.
By a {\em neighborhood} of a node $u$ we mean the set (and positions) of all
neighbors of $u$ in the communication graph $G(V,E)$
of the underlying network, i.e., the set $\{w\,|\, (w,u)\in E\}$.
The {\em graph distance} from $v$ to $w$ is equal to the length of a shortest path from $v$ to $w$
in the communication graph, where the length of a path is equal to the number of its edges.
The {\em eccentricity} of a node
is the maximum graph
distance from this node to all other nodes
(note that the eccentricity is of order of the diameter if the communication
graph is symmetric --- this is also the case in this work).

\tj{
We say that a station $v$ transmits {\em $c$-successfully} in a round
$t$ if $v$ transmits a message in round $t$ and this message is received by
each station $u$ in Euclidean distance from $v$ smaller or equal to $c$. }
We say that node $v$ transmits {\em successfully} to node $u$ in a round $t$ if $v$ transmits
a message in round $t$ and $u$ receives this message.
A station $v$ transmits {\em successfully} in round $t$ if it transmits
successfully to each of its neighbors in the communication graph.
%
%

\paragraph{Synchronization and rounds.}
It is assumed that algorithms work synchronously in time slots, also called {\em rounds}:
each station can act
either as a sender or as a receiver during a round.
We do not assume global clock ticking; algorithm could easily synchronize their rounds by updating
round counter and passing it along the network with messages.

\paragraph{Collision detection.}
We consider the model {\em without collision detection}, that is,
if a station $u$ does not receive a message in a round $t$, it has no information
whether any other station was transmitting in that round,
and no information about the received signal,
e.g., no information about the value of $SINR(v,u,\m{T})$, for any station $v$,
where $\m{T}$ is the set of transmitting stations in round $t$.

\paragraph{Broadcast problem and performance parameters.}
In the broadcast problem studied in this work, there is one distinguished node, called a {\em source},
which initially holds a piece of information, also called a {\em source message} or a
{\em broadcast message}.
The goal is to disseminate this message to all other nodes by sending messages along the network.
Detail performance specification depends on the considered model.

\noindent
{\em Broadcast in the SINR model with opportunistic links:}
The complexity measure is the worst-case time to accomplish the broadcast task,
taken over all networks with specified parameters that have their communication graphs
{\em for fixed model parameter $\ep\in (0,1)$} connected.

\noindent
{\em Broadcast in the SINR model with random disturbances:}
The complexity measure is the worst-case time to accomplish the broadcast task,
taken over all networks with specified parameters that have their communication graphs
{\em defined for $\ep=1$} connected.
Note that in this model $\ep$ is not used as a model parameter, but only with the fixed value $1$
to specify the range of admissible networks. Intuitively, the admissible networks in this case are those
connected according to ``average links'', that is, links
for which the expected transmission ranges (i.e., based on expected modified SINR) are taken into account.
Observe that the broadcasting time is a random variable, even for deterministic algorithms,
due to random disturbances incurred by the model.

Time, also called the {\em round complexity}, denotes here the number of communication rounds in
the execution of a protocol: from the round when the source is activated
with its broadcast message till the broadcast task is accomplished (and each station is aware that
its activity in the algorithm is finished).
For the sake of complexity formulas, we consider the following parameters:
$n$, \dk{$\cI$,} $D$, and $g$, where
$n$ is the number of nodes,
\dk{$[\cI]$} is the \dk{domain}
of IDs,
$D$ is the eccentricity of the source,
and $g$ is the granularity of the network, defined as $r$ times the
inverse of the minimum distance between any two stations (c.f.,~\cite{EmekGKPPS09}).

\paragraph{Messages and initialization of stations other than source.}
We assume that a single message sent in the execution of any algorithm
can carry the broadcast message and at most polynomial in the
size of the network $n$ number of control bits in the size of the network
\tj{(however, our randomized algorithms need only logarithmic number of control bits)}.
For simplicity of analysis, we assume that every message sent during the execution
of our broadcast protocols contains the broadcast message; in practice, further optimization
of a message content could be done in order to reduce the total number of transmitted bits in real executions.
A station other than the source starts executing the broadcasting protocol
after the first successful receipt of the broadcast message; we call it
a {\em non-spontaneous wake-up model}, to distinguish from other possible settings,
not considered in this work,
where stations could be allowed to do some pre-processing
(including sending/receiving messages) prior receiving
the broadcast message for the first time.
We say that a station that received the broadcast message is {\em informed}.

\paragraph{Knowledge of stations.}
Each station knows its own ID, location coordinates, and parameters $n$,
\dk{$\cI$.}
(However, in randomized solutions, IDs can be chosen randomly from
the polynomial range such that each ID is unique with high probability.)
Some subroutines use the granularity $g$ as a parameter, though
our main algorithms can use these subroutines without being aware
of the actual granularity of the input network.
We consider two settings: one with local knowledge of density, in which each station knows also the
number of other stations in its close proximity (dependent on the $\eps$ parameter) and the other
when no extra knowledge is assumed.

\remove{
depending on the algorithm, other general network parameters such as:
diameter $D$ of the imposed communication graph,
or granularity of the network $g$, defined as the inverse of the smallest
distance between any pair of stations.\footnote{%
In many cases, the assumption about the knowledge of $D,g$ can be dropped,
by running parallel threads for different ranges of values of these parameters and implementing an additional coordination mechanism between the threads.}
}

\vspace*{-1ex}
\subsection{Grids}

Throughout the paper, we use notation $\NAT$ for the set of natural numbers,
$\NAT_+$ for the set $\NAT\setminus\{0\}$, and $\INT$
for the set of integers.
Given a parameter $c>0$, we define 
a partition of the $2$-dimensional space
into square boxes of size $c\times c$ by the grid $G_c$, in such a way that:
all boxes are aligned with the coordinate axes,
point $(0,0)$ is a grid point,
each box includes its left side without the top
endpoint and its bottom side without the right endpoint and
does not include its right and top sides.
We say that $(i,j)$ are the coordinates
of the box with its bottom left corner located at $(c\cdot i, c\cdot j)$,
for $i,j\in \INT$. A box with coordinates
$(i,j)\in\INT^2$ is denoted $C_c(i,j)$ or $C(i,j)$ when the side of a grid
is clear from the context.

Let $\eps$ be the parameter defining the communication graph.
Then, $z=(1-\eps)r/\sqrt{2}$ is the largest value such that
the each two stations located in the same box of the grid $G_{z}$
are connected in the communication graph.
Let $\eps'=\eps/3$, $r'=(1-\eps')r=1-\eps'$ and $\gamma'=r'/\sqrt{2}$.
We call $G_{\gamma'}$ the {\em pivotal grid}, borrowing terminology from
radio networks research~\cite{DessmarkP07}.
%

Boxes $C(i,j)$ and $C'(i',j')$ are {\em adjacent} if
$|i-i'|\leq 1$ and $|j-j'|\leq 1$ (see Figure~\ref{fig:adjacent}).
For a station $v$ located in position $(x,y)$ on the plane we define its {\em grid
coordinates} $G_c(v)$ with respect to the grid $G_c$ as the pair of integers $(i,j)$ such that the point $(x,y)$ is located
in the box $C_c(i,j)$ of the grid $G_c$ (i.e., $ic\leq x< (i+1)c$ and
$jc\leq y<(j+1)c$).
\dk{The distance between two different boxes is the maximum Euclidean distance between
any two points of these boxes;
the distance between a box and itself is $0$.}

\remove{ 
A (general) {\em broadcast schedule} $\mathcal{S}$ of length $T$
wrt  $N\in\NAT$ is a mapping
from $[N]$ to binary sequences of length $T$.
A station
with identifier $v\in[N]$ {\em follows}
the schedule $\m{S}$ of length $T$ in a fixed period of time consisting of $T$ rounds,
when
$v$ transmits a message in round $t$ of that period iff
the 
position $t\mod T$ of
$\m{S}(v)$ is equal to $1$.

A {\em geometric broadcast schedule} $\mathcal{S}$ of length $T$
with parameters $N,\delta\in\NAT$, $(N,\delta)$-gbs for short, is a mapping
from $[N]\times [0,\delta-1]^2$ to binary sequences of length $T$.
Let $v\in[N]$ be a station whose grid coordinates
with respect to
the grid $G_c$ are equal to $G_c(v)=(i,j)$.
We say that $v$ {\em follows}
$(N,\delta)$-gbs $\m{S}$ 
for the grid $G_c$
in a fixed period of time, 
when $v$ transmits a message in round $t$ of that period iff
the $t$th position of
$\m{S}(v,i\mod \delta,j\mod\delta)$ is equal to $1$.
}  

\tj{
A set of stations $A$ on the plane is {\em $d$-diluted} wrt $G_c$, for $d\in\NAT\setminus\{0\}$, if
for any two stations $v_1,v_2\in A$ with grid coordinates $(i_1,j_1)$ and $(i_2,j_2)$, respectively,
the relationships $(|i_1-i_2|\mod d)=0$ and $(|j_1-j_2|\mod d)=0$ hold.}

\remove{ 
Let $\m{S}$ be a general broadcast schedule wrt $N$ of length $T$,
let $c>0$ and $\delta>0$, $\delta\in\NAT$.
A $\delta$-dilution of a  $\m{S}$ 
is defined as a $(N,\delta)$-gbs $\m{S}'$ such that the bit $(t-1)\delta^2+a\delta+b$
of $\m{S}'(v,a,b)$ is equal to $1$ iff the bit $t$ of $\m{S}(v)$
is equal to $1$. That is, each round $t$ of $\m{S}$ is
partitioned
into $\delta^2$ rounds
of $\m{S}'$, indexed by pairs $(a,b)\in [0,\delta-1]^2$, such that a station
with grid coordinates $(i,j)$ in $G_c$ is
allowed to send messages only in rounds with index $(i\mod\delta,j\mod\delta)$,
provided schedule $\m{S}$ admits a transmission in its (original) round $t$.
%
%
} 

\begin{figure}
\begin{center}
\epsfig{file=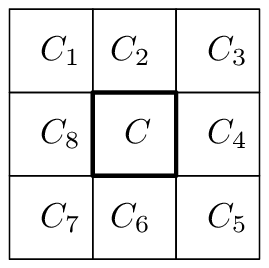, scale=0.7}
\end{center}
\vspace*{-3ex}
\caption{%
The boxes
$C_1,\ldots,C_8$ are adjacent to $C$.}
\label{fig:adjacent}
\vspace*{-2ex}
\end{figure}%
\remove{ 
Observe that, since ranges of stations are equal to the length
of diagonal of boxes of the pivotal grid, a box $C(i,j)$ can have at most
$20$ neighbors (see Figure~\ref{fig:adjacent}).
We define the set $\DIR\subset[-2,2]^2$ such  that $(d_1,d_2)\in\DIR$ iff
it is possible that boxes with coordinates $(i,j)$ and $(i+d_1,j+d_2)$
can be neighbors.
Given $(i,j)\in\INT^2$ and $(d_1,d_2)\in\DIR$, we say that the box $C(i+d_1,j+d_2)$
is {\em located in direction} $(d_1,d_2)$ from the box $C(i,j)$.
} 

\comment{%
\tj{For a family $\m{F}=(F_0,\ldots,F_{k-1})$ of subsets of $[N]$, an {\em execution}
of $\m{F}$ on a set of stations $V$ is a protocol in which $v\in V$ transmits in
rund $i$ iff $v\in F_{i\mod k}$.
}
A family $S$ of subsets of $[N]$ is a {\em $(N,k)$-ssf (strongly-selective family)}
if, for every non empty subset $Z$ of $[N]$ such that
$|Z|\leq k$ and for every element $z\in Z$, there is a set $S_i$ in $S$ such that
$S_i\cap Z=\{z\}$.
It is known that there exists $(N,k)$-ssf of size $O(k^2\log N)$ for every $k\leq N$,
c.f.,~\cite{ClementiMS01}.
Let $k=(2d+1)^2$, let $S$ be a $(N,k)$-ssf,
$s=|S|=O(\log N)$,
$N=poly(n)$.
The sets $S_0,\ldots,S_{s-1}$ of the family $S$ define a broadcast schedule $S'$ in such a way
that station $v$ transmits in round $t$ iff $v\in S_{t\mod s}$. (That is, $S'_{t\mod s}[v]=1$ iff
$v\in S_{t\mod s}$.)
}














\vspace*{-2ex}
\section{Leader Election in Boxes}
\label{s:leader}

The main goal of this paper is to develop
two deterministic algorithms:
%
one depending on the knowledge of network
granularity, and one general algorithm which
does not need such knowledge.
%
\comment{
The latter general algorithm works
merely for $\alpha>2$. 
(i.e., it is not applicable for $\alpha=2$).
}
The key ingredient of both protocols is a leader election sub-routine.
We consider leader election problem defined as follows.
Given $x\leq (1-\lambda)/\sqrt{2}$, for $0<\lambda<1$,
and a set of ``active'' stations $V$, the goal is to choose
a leader in each box of the grid $G_x$ containing at least
one element of $V$.
%
In this section we design
two algorithms for the defined leader election problem, and 
in the next section
we will show how to apply them to obtain scalable deterministic distributed broadcasting protocols.
%
\dk{In every deterministic algorithm, we assume that each message carries all values stored in its sender.}


\vspace*{-1ex}
\subsection{Granularity-dependent leader election}


%
%
%

%
Let DilutedTransmit($V,x,d$) be the following procedure, consisting of $d^2$
communication rounds:
\begin{algorithm}[H]
	\caption{DilutedTransmit($V,x,d$)}
	\label{alg:diluted}
	\begin{algorithmic}[1]
    \For{each $a,b\in[0,d-1]^2$}
        \State $A\gets \{v\in V\,|\, G_x(v) \equiv (a,b) \mod d\}$
        \State All elements of $A$ transmit a message
    \EndFor
    \end{algorithmic}
\end{algorithm}

Below are two useful properties of DilutedTransmit; see Appendix for the proof of Proposition~\ref{prop:diluted:transmit}.
We say that a function $d_{\alpha}:\NAT\to\NAT$ is {\em flat} 
if
\iffull
\begin{equation}
d_{\alpha}(n)=\left\{
\begin{array}{rcl}
O(1) & \mbox{ for } & \alpha>2\\
O(\log n) & \mbox{ for } & \alpha=2
\end{array}
\right.
\end{equation}
\else
$d_{\alpha}(n)=O(1)$ for $\alpha>2$. 
\fi
\begin{proposition}\labell{prop:diluted:transmit}
Let $V$ be a set of at most $n$ stations such that there is at most one station in each
box of $G_x$ and $x\leq (1-\lambda)/\sqrt{2}$ for $0<\lambda<1$. Then, there exists a flat function
$d_\alpha(n)$ such that each element of $V$ transmits
$(2\sqrt{2}x)$-successfully during
DilutedTransmit($V,x,\sqrt{d_\alpha(n)}$).
\end{proposition}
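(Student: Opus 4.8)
The plan is to analyze a single one of the $d^2$ rounds of the procedure, where $d=\sqrt{d_\alpha(n)}$, and to show that the unique station transmitting out of each occupied box reaches every point within Euclidean distance $2\sqrt2\,x$. Fix the round with shift $(a,b)$ and let $A=\{v\in V\mid G_x(v)\equiv(a,b)\bmod d\}$ be its transmitter set. By the hypothesis that each box of $G_x$ holds at most one station, the elements of $A$ occupy distinct boxes whose coordinates are pairwise congruent modulo $d$; hence any two of them are separated by at least $d$ boxes along at least one axis, i.e.\ $A$ is $d$-diluted. Fix $v\in A$ and a receiver $u$ with $\dist(v,u)\le 2\sqrt2\,x$. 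First I would note that all boxes meeting the disc of radius $2\sqrt2\,x$ around $v$ lie in a bounded neighborhood of $v$'s box, so for $d$ a large enough constant none of them except $v$'s own box is congruent to $(a,b)\bmod d$; thus $u\notin A$ is a legitimate receiver, and it suffices to prove $SINR(v,u,A)\ge\beta$.

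Next I would bound signal and interference separately. The intended signal power at $u$ is $P\,\dist(v,u)^{-\alpha}\ge P(2\sqrt2\,x)^{-\alpha}$. For the interference I would group the interferers $A\setminus\{v\}$ into concentric square shells around $u$'s box in the period-$d$ sublattice that carries the transmitters: shell $k$ contains at most $8k$ transmitters, and every transmitter in shell $k$ lies at distance at least $c_1\,k\,d\,x$ from $u$ for an absolute constant $c_1>0$ (the intra-box offsets and the small gap to $v$'s box are absorbed into $c_1$ once $d$ is large). Summing the interference then yields
\[
\sum_{w\in A\setminus\{v\}}P\,\dist(w,u)^{-\alpha}\;\le\;\sum_{k\ge1}8k\cdot P\,(c_1\,k\,d\,x)^{-\alpha}\;=\;8\,c_1^{-\alpha}\,P\,(dx)^{-\alpha}\sum_{k\ge1}k^{1-\alpha}\,.
\]

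The crucial point is the convergence of $\sum_{k\ge1}k^{1-\alpha}$, which holds precisely because $\alpha>2$, producing a constant $C_\alpha$ and an interference bound $C'_\alpha\,P\,(dx)^{-\alpha}$ with $C'_\alpha$ depending only on $\alpha$. (For $\alpha=2$ the partial sum grows like the logarithm of the number of shells, which over $n$ stations is $O(\log n)$, and this is exactly the source of the $O(\log n)$ branch of the flat function in the full version.) Combining and using $P=\beta\cN$,
\[
SINR(v,u,A)\;\ge\;\frac{P(2\sqrt2\,x)^{-\alpha}}{\cN+C'_\alpha\,P\,(dx)^{-\alpha}}\;=\;\frac{\beta\,(2\sqrt2\,x)^{-\alpha}}{1+C'_\alpha\,\beta\,(dx)^{-\alpha}}\,.
\]
Both $(2\sqrt2\,x)^{-\alpha}$ and $(dx)^{-\alpha}$ carry the same factor $x^{-\alpha}$, so the bound is controlled by $d$ alone: taking $d$ (equivalently $d_\alpha(n)=d^2$) a sufficiently large constant drives $C'_\alpha\,\beta\,(dx)^{-\alpha}$ below any prescribed fraction of $(2\sqrt2\,x)^{-\alpha}$, and the remaining task is the direct check that the margin supplied by the hypothesis $x\le(1-\lambda)/\sqrt2$ outweighs this now-negligible interference, giving $SINR\ge\beta$. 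Hence every $u$ within distance $2\sqrt2\,x$ decodes $v$, i.e.\ $v$ transmits $(2\sqrt2\,x)$-successfully; since the box coordinates modulo $d$ assign each station of $V$ to exactly one round, every element of $V$ transmits $(2\sqrt2\,x)$-successfully during the whole procedure.

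I expect the \emph{main obstacle} to be the interference estimate: modelling the diluted transmitters as a period-$d$ sublattice, counting the $O(k)$ transmitters per shell, and certifying the per-shell distance lower bound $c_1\,k\,d\,x$ uniformly (including the innermost shell and the intra-box position of $u$), so that the series converges for $\alpha>2$ and the total interference is $O(P(dx)^{-\alpha})$. Once that estimate is in hand, selecting the constant $d$ large enough to clear noise plus interference against the guaranteed signal is a routine calculation.
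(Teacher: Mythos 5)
Your overall strategy coincides with the paper's: the paper proves this proposition as a one-line corollary of an appendix lemma (Proposition~\ref{prop:lead1}), whose proof is exactly your argument --- each round of DilutedTransmit activates a $d$-diluted set, the interferers are grouped into concentric frames containing $O(k)$ occupied boxes at max-distance $kd$, hence at Euclidean distance $\Omega(kdx)$ from the receiver, and $\sum_k k^{1-\alpha}$ converges because $\alpha>2$, giving total interference $O\bigl(P(dx)^{-\alpha}\bigr)$. Up to that point your proof is sound and matches the paper's.

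The gap is in the final threshold comparison, and it sits exactly where the proposition's constant matters. From your last display, $SINR\ge\beta$ requires $(2\sqrt{2}x)^{-\alpha}\ge 1+C'_\alpha\beta(dx)^{-\alpha}$. The interference term does scale as $x^{-\alpha}$ and can be made an arbitrarily small fraction of the signal by enlarging $d$; but the ambient-noise term --- the $1$ in your denominator, i.e.\ $\cN$ after using $P=\beta\cN$ --- carries no factor $x^{-\alpha}$, so the inequality is \emph{not} ``controlled by $d$ alone'': no choice of $d$ helps unless $(2\sqrt{2}x)^{-\alpha}>1$, that is $2\sqrt{2}x<1$, and moreover with a quantitative margin $1-(2\sqrt{2}x)^{\alpha}\ge\lambda$ left over to absorb the interference. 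The hypothesis as stated, $x\le(1-\lambda)/\sqrt{2}$, does not provide this: it allows $2\sqrt{2}x$ up to $2(1-\lambda)>1=r$, in which case the intended receiver at distance $2\sqrt{2}x$ lies beyond the transmission range and cannot decode even with zero interference, so the ``routine'' check you deferred is actually impossible in that regime. What is needed --- and what the paper's Proposition~\ref{prop:lead1} assumes --- is $x\le(1-\lambda)/(2\sqrt{2})$, i.e.\ $2\sqrt{2}x\le 1-\lambda$; then $1-(2\sqrt{2}x)^{\alpha}\ge\lambda$ and the computation closes, at the price that the dilution must satisfy $d\ge(d_\alpha(n)/\lambda)^{1/\alpha}$: the $1/\lambda$ factor is precisely the noise margin your argument dropped, and it also shows that your $d$ cannot be an absolute constant independent of $\lambda$. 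A careful completion of your own inequality would have surfaced this: you needed either to flag the statement's hypothesis as a typo for $x\le(1-\lambda)/(2\sqrt{2})$, or to restrict to that range and track the margin $\lambda$ explicitly; as written, the last step fails.
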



We say that a box $C$ of the grid $G_x$ has a {\em leader} from set $A$ if there is one station
$v\in A$ located in $C$
with status {\em leader} and all stations from $A$ located in $C$ know which station it is.

\begin{proposition}\labell{prop:lead2}
Assume that $A$ is a set of leaders in some boxes of the grid $G_x$, where $x\leq\frac{1-\lambda}{2\sqrt{2}}$, and
each station knows whether it belongs to $A$. Then,
it is possible to choose the leader of each box of $G_{2x}$ containing at least one element of $A$
in 
$O(\frac{d_{\alpha}(n)}{\lambda})$ rounds, where $d_{\alpha}$ is a flat
function.
%
\end{proposition}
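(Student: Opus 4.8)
\emph{Overall strategy.} First I would observe that, by the definition of a leader, the set $A$ contains at most one station per box of $G_x$, so $A$ is an admissible input for \textsc{DilutedTransmit} on the grid $G_x$. The key geometric fact is that a box of $G_{2x}$ is the union of the four boxes $C_x(i,j)$, $C_x(i+1,j)$, $C_x(i,j+1)$, $C_x(i+1,j+1)$ of $G_x$, and its diameter equals $2x\sqrt{2}=2\sqrt{2}\,x\le 1-\lambda<r$, where the inequalities use $x\le\frac{1-\lambda}{2\sqrt{2}}$ and $r=1$. Hence any two elements of $A$ lying in the same box of $G_{2x}$ are at Euclidean distance at most $2\sqrt{2}\,x$, strictly inside the transmission range, with a spare margin $\lambda$. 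The plan is therefore: (i) run \textsc{DilutedTransmit}$(A,x,d)$ with dilution $d$ large enough to make every transmission reach distance $2\sqrt{2}\,x$; and (ii) have every element of $A$ locally select, among the elements of $A$ it has learned to share its $G_{2x}$ box, the one with the smallest ID as the leader of that $G_{2x}$ box.

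\emph{Correctness of the election.} To justify (i) I would invoke Proposition~\ref{prop:diluted:transmit}: on the grid $G_x$ it delivers precisely $(2\sqrt{2}\,x)$-successful transmissions, which is exactly the reach we need, and since $2\sqrt{2}\,x\le(1-\lambda)r$ keeps us below the range with margin $\lambda$, a suitable dilution $d$ realizes this reach for every element of $A$. Consequently, whenever a leader $v\in A$ transmits, every other element of $A$ in the same $G_{2x}$ box receives its message, provided it is itself listening, i.e.\ not transmitting in that round. This is where the dilution structure is used: the four $G_x$-subboxes of a $G_{2x}$ box have coordinates that are pairwise non-congruent modulo $d$ as soon as $d\ge 2$ (two integers differing by $1$ cannot be congruent mod $d$ unless $d=1$), so the at most four elements of $A$ inside one $G_{2x}$ box are assigned to four \emph{distinct} rounds of \textsc{DilutedTransmit}. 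Hence each such element is silent and listening during the rounds of the others and therefore hears all of them. Because every transmitted message carries the sender's stored data, in particular its coordinates, each receiver can compute the $G_{2x}$-box of every sender and thus reconstruct exactly the set of elements of $A$ sharing its own $G_{2x}$ box.

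\emph{Agreement, status update and complexity.} After \textsc{DilutedTransmit}, all elements of $A$ inside a common $G_{2x}$ box hold the same set (themselves together with the others they heard), so the deterministic minimum-ID rule makes them agree on a single leader; the winner retains the leader status, the others drop it, and every element of $A$ in the box learns the winner's identity, which is exactly the condition for the box to \emph{have a leader from} $A$. For the round count, \textsc{DilutedTransmit}$(A,x,d)$ runs in $d^2$ rounds, and I would apply Proposition~\ref{prop:diluted:transmit} with the margin parameter chosen so that the guaranteed reach is $2\sqrt{2}\,x$; because this reach is only a factor $(1-\lambda)$ below the range $r$, the aggregate interference at the receiver must be suppressed below a slack of order $\lambda$, which inflates the flat dilution value $d_\alpha(n)$ by at most a factor $O(1/\lambda)$, giving $d^2=O(d_\alpha(n)/\lambda)$. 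The main obstacle is precisely this near-range transmission: certifying $(2\sqrt{2}\,x)$-successful delivery when $2\sqrt{2}\,x$ is as large as $(1-\lambda)r$ forces careful control of the interference margin, and this has to coexist with the requirement $d\ge 2$ that separates the four subboxes into distinct rounds; taking $d=\max\{2,\lceil\sqrt{d_\alpha(n)/\lambda}\rceil\}$ reconciles both while keeping the round complexity $O(d_\alpha(n)/\lambda)$.
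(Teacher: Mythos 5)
Your proof is correct, and it uses the same analytical machinery as the paper (the diluted-transmission guarantee of Proposition~\ref{prop:lead1}/\ref{prop:diluted:transmit} plus a deterministic local selection rule), but the algorithmic organization is genuinely different. The paper first labels the four $G_x$-sub-boxes of each $G_{2x}$ box by $\{1,2,3,4\}$ and runs \emph{four separate phases}, phase $F_i$ being a full DilutedTransmit on the set of leaders carrying label $i$; the silence of the other three leaders during phase $F_i$ is thus enforced by the phase structure. You instead run a \emph{single} DilutedTransmit on all of $A$ and observe that the phase separation comes for free: the four sub-boxes have $G_x$-coordinates differing by $1$ in some coordinate, hence are pairwise non-congruent modulo $d$ once $d\ge 2$, so the (at most four) leaders of one $G_{2x}$ box are automatically scheduled in distinct rounds and each listens while the others speak. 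This is a clean simplification that saves the factor-of-four repetition and slightly streamlines the argument; the price is that you must take $d=\Theta(\sqrt{d_\alpha(n)/\lambda})$ to keep $d^2=O(d_\alpha(n)/\lambda)$, whereas the paper's phase-based version can use the smaller dilution $d=(d_\alpha(n)/\lambda)^{1/\alpha}$ and in fact runs in $O\left((d_\alpha(n)/\lambda)^{2/\alpha}\right)$ rounds, asymptotically below the stated bound when $\alpha>2$ (both, of course, satisfy the claimed $O(d_\alpha(n)/\lambda)$). The remaining differences are cosmetic: you elect the minimum ID while the paper elects the minimum sub-box label; either rule yields agreement, since in both arguments every leader of a $G_{2x}$ box hears all the others and messages carry the senders' coordinates, so each receiver can filter senders by $G_{2x}$ box.
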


\begin{proof}
Note that each box of $G_{2x}$ consists of four boxes of $G_x$. Let us fix some labeling of this four
boxes by the numbers $\{1,2,3,4\}$, the same in each box of $G_{2x}$.
Now, assign to each 
station from $A$
the label $l\in\{1,2,3,4\}$ corresponding
to its position in the box of $G_{2x}$ containing it.
We ``elect'' leaders in $G_{2x}$ in four phases
$F_1,\ldots,F_4$. Phase $F_i$ is just the execution of
DilutedTransmit($A,x,d$) for $d=(d_\alpha(n)/\lambda)^{1/\alpha}$ and
$A$ equal to the
set of leaders with label $i$ (see Proposition~\ref{prop:lead1} in the Appendix).
%
%
Therefore, each leader from $A$ can hear messages of all other (at most)
three leaders located in the same box of $G_{2x}$. Then, for a box $C$ of $G_{2x}$, the leader with the
smallest label (if any) among leaders of the four sub-boxes of $C$ becomes the leader of $C$.
Finally, complexity bound stated in the proposition follows directly from Proposition~\ref{prop:lead1}
in the Appendix and inequality $\alpha>2$.
\end{proof}

\paragraph{Algorithms LeadIncrease and GranLeaderElection.}
\tj{Let LeadIncrease($A,x,\lambda$) denote a procedure, which,
given leaders of boxes of $G_x$,
chooses leaders of boxes of $G_{2x}$ in $O(\frac{d_{\alpha}(n)}{\lambda})$ rounds.
Such a procedure exists by Proposition~\ref{prop:lead2}.
Repeating this procedure sufficiently many times for different sets of input parameters,
we obtain the following granularity-dependent 
leader election algorithm.
}

\iffull
Assume that granularity of a network is
equal to $g$.
Given $z\leq (1-\lambda)/\sqrt{2}$, our goal is to choose leaders of boxes of $G_z$. To this aim
we choose the smallest $i$ such that $\frac{z}{2^i}\leq \frac1{g}$. Then, each station
can be recognized as the leader of its box of $G_{z}$. Next, the protocol LeadIncrease
is executed repeatedly until the leaders of boxes of $G_z$ are obtained.
\else
\fi


\begin{algorithm}[H]
	\caption{GranLeaderElection($V,g,z$)}
	\label{alg:gran}
	\begin{algorithmic}[1]
    \State $x\gets\max\{\frac{z}{2^i}\,|\, i\in\NAT,\frac{z}{2^i}\leq \frac1{g}\}$
    \State $A\gets V$ \Comment{Each station is a leader of its box of $G_x$}
    \State $d\gets d_\alpha(n)$
    \While{$x\leq z/2$}
        \State $\lambda\gets (1-2\sqrt{2}x)$
        \State LeadIncrease($A,x,\lambda$)
        \State $A\gets $ leaders of boxes of $G_{2x}$
        \State $x\gets 2x$
    \EndWhile
    \end{algorithmic}
\end{algorithm}
\iffull
Finally, we summarize properties of Algorithm GranLeaderElection in the following
proposition.
\else
Let $d_\alpha(n)$ be a flat function from Proposition~\ref{prop:lead2}.
\begin{theorem}\labell{t:leader:gran}
Given $z<1/\sqrt{2}$, the
algorithm GranLeaderElection$(V,g,z)$ chooses a leader in each box of
the grid $G_z$ containing at least one element of $V$ in
\tj{$O((1/\lambda+\log (gz))d_{\alpha}(n))$ rounds, where $\lambda=1-\sqrt{2}z$ and
granularity of a network is at most $g$.}
\end{theorem}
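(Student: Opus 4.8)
The plan is to analyze Algorithm GranLeaderElection$(V,g,z)$ directly, tracking how the box side $x$ evolves and invoking Proposition~\ref{prop:lead2} at each iteration of the while loop. First I would establish that the initialization is correct: since $x$ is chosen as the largest value of the form $z/2^i$ with $z/2^i \leq 1/g$, the box side $x$ satisfies $x \leq 1/g$, and because granularity is at most $g$ means the minimum inter-station distance is at least $1/g$ (recall $r=1$), each box of the grid $G_x$ contains at most one station. Hence the initial assignment $A \gets V$ legitimately makes every station the unique leader of its own box of $G_x$, satisfying the hypothesis needed to begin the cascade.

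\smallskip

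Next I would verify that each iteration of the while loop is a valid application of Proposition~\ref{prop:lead2}, which requires $x \leq (1-\lambda)/(2\sqrt{2})$. Since the algorithm sets $\lambda = 1 - 2\sqrt{2}x$, we have $(1-\lambda)/(2\sqrt{2}) = 2\sqrt{2}x/(2\sqrt{2}) = x$, so the hypothesis $x \leq (1-\lambda)/(2\sqrt{2})$ holds with equality, and LeadIncrease correctly elects a unique leader in each nonempty box of $G_{2x}$ in $O(d_\alpha(n)/\lambda)$ rounds. The loop continues doubling $x$ until $x > z/2$, at which point $2x > z$; combined with the fact that the loop last ran while $x \leq z/2$, the final doubling produces leaders of $G_z$ exactly. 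Because the side grows geometrically from its initial value near $1/g$ up to $z$, the number of iterations is $O(\log(gz))$, which sets up the logarithmic term in the claimed bound.

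\smallskip

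The main obstacle, and the part requiring the most care, is bounding the total round complexity given that $\lambda$ changes at every iteration. In iteration with side $x$, the cost is $O(d_\alpha(n)/\lambda) = O(d_\alpha(n)/(1-2\sqrt{2}x))$. For the early iterations $x$ is small and $\lambda$ is close to $1$, so the per-iteration cost is $O(d_\alpha(n))$ and summing over the $O(\log(gz))$ iterations gives the $O(\log(gz)\,d_\alpha(n))$ contribution. The delicate iterations are the last few, where $x$ approaches $z/2$ and hence $2x$ approaches $z$, making $\lambda = 1 - 2\sqrt{2}x$ shrink toward $1 - \sqrt{2}z$, the global $\lambda$ in the theorem statement. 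I would argue that the dominant cost among all iterations is the final one, whose $\lambda$ is smallest and comparable to $1-\sqrt{2}z$, contributing $O(d_\alpha(n)/\lambda)$ with $\lambda = 1-\sqrt{2}z$; the geometric decrease of $x$ backward from $z/2$ means the $1/\lambda$ factors form a convergent-type sum dominated by this last term. Adding the two contributions yields the stated $O\bigl((1/\lambda + \log(gz))\,d_\alpha(n)\bigr)$ bound.

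\smallskip

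Finally I would confirm correctness of the output: by induction over the loop, the invariant ``$A$ is the set of leaders of all nonempty boxes of $G_x$, and each station knows its membership in $A$'' is preserved by each LeadIncrease call via Proposition~\ref{prop:lead2}, and upon termination $x=z$, so $A$ consists of exactly one leader per nonempty box of $G_z$, as required.
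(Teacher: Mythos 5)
Your proof is correct and follows essentially the same route as the paper's: initialize at scale $x \le 1/g$ so every station is the leader of its own box, apply Proposition~\ref{prop:lead2} through $O(\log(gz))$ doublings of the box side, and account for the cost by noting that every iteration except the last has $\lambda$ bounded below by a constant, so only the final call contributes the $O(d_\alpha(n)/\lambda)$ term with $\lambda = 1-\sqrt{2}z$. The only cosmetic difference is that where you argue via a ``convergent-type sum dominated by the last term'' (each earlier term is in fact $\Theta(1)$, not geometrically decaying, so the early iterations genuinely contribute the $O(\log(gz)\,d_\alpha(n))$ part), the paper states the cleaner observation that all but the last execution of LeadIncrease runs with $\lambda \ge 1/2$, yielding the same two-part bound.
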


\begin{proof}
Correctness \tj{of GranLeaderElection} follows from properties of LeadIncrease and choice of parameters
(see Proposition~\ref{prop:lead2}). Proposition~\ref{prop:lead2} and the choice of $x$
\tj{in line 1 of GranLeaderElection directly
imply the bound $O(\frac{d_{\alpha}(n)\log (gz)}{\lambda})$. }
However, all but
the last execution of LeadIncrease is called with $\lambda\geq 1/2$ which gives
the result.
\end{proof}

\vspace*{-2ex}
\subsection{General leader election}

In the following, we describe leader election algorithm 
that chooses leaders of boxes of the grid $G_z$ in
$O(\log^2 n/\lambda^2)$ rounds,
provided $z < 1/\sqrt{2}$, $\alpha>2$
and $\lambda=1-\sqrt{2}z$.

For a family $\m{F}=(F_0,\ldots,F_{k-1})$ of subsets of $[\cI]$, an {\em execution}
of $\m{F}$ on a set of stations $V$ is a protocol
in which $v\in V$
\tj{transmits in round $i\in[0,t-1]$ iff $v\in F_{i\mod k}$.}
A family $S$ of subsets of $[\cI]$ is a {\em $(\cI,k)$-ssf (strongly-selective family)}
if, for every non empty subset $Z$ of $[\cI]$ such that
$|Z|\leq k$ and for every element $z\in Z$, there is a set $S_i$ in $S$ such that
$S_i\cap Z=\{z\}$.
It is known that there exists $(\cI,k)$-ssf of size $O(k^2\log \cI)$ for every $k\leq \cI$,
c.f.,~\cite{ClementiMS01}.

\tj{In the algorithm choosing leaders of boxes of $G_z$ for $z=(1-\lambda)/\sqrt{2}$,
we use a $(\cI,k)$-ssf family $S$ of size
$s=O(\log \cI)$, where $k$ is a constant depending on $\lambda$ and on $\alpha>2$.
We will execute $S$ on various sets of stations.
}
The set $X_v$, for a given
execution of $S$ and station $v$, is defined as the set of IDs of stations belonging to $\boxx_z(v)$
such that $v$ can hear them during that execution of $S$.

\def\leaddesc{
The leader election algorithm GenLeaderElection$(V,z)$
chooses leaders from $V$ in boxes of $G_z$.
It consists of two stages.
The first stage gradually eliminates
the set of candidates \tj{for leaders (simultaneously in all boxes)}
in consecutive executions of a strongly-selective family $S$.
It is implemented as a for-loop. We call this stage {\em Elimination}.

Let {\em block} $l$ of Elimination stage denote the executions of family $S$ for $i=l$.
Each ``eliminated'' station $v$ has
assigned the value $ph(v)$, which is equal to the number of the block in which it is eliminated.
\tj{Let $V(l)=\{v\,|\, ph(v)>l\}$ and $V_C(l)=\{v\,|\, ph(v)>l\mbox{ and } \boxx_z(v)=C\}$, for $l\in\NAT$}
and $C$ being a box of grid $G_z$. The key properties of sets $V_C(l)$ are:
$|V_C(l+1)|\leq |V_C(l)|/2$ for every box $C$ and $l\in\NAT$,
and the granularity of $V_C(l_C^{\star})$ is smaller than $n/z$
for every box $C$ and $l_C^{\star}$ being the largest $l\in\NAT$ such that $V_C(l)$
is not empty.
%
Thus, in particular, $V_C(l)=\emptyset$ for each $l\geq \log n$ \tj{and each box $C$ of $G_z$}.
%
Motivated by the above observations, the algorithm in its second stage chooses the leader of
each box $C$ by applying --- simultaneously in each box --- the granularity-dependent
leader election algorithm GranLeaderElection
on $V_C(\log n)$, $V_C(\log n-1)$, $V_C(\log n-2)$ and so on, until each box has its leader elected.
The second stage of the algorithm is called {\em Selection}.
}

We provide the
pseudo-code of the leader election algorithm in Algorithm~\ref{alg:leader}, and 
then its correctness and complexity analysis will proceed (some technical details are deferred to the appendix). 
All references to boxes in the algorithm regard boxes of $G_z$.

\begin{algorithm}[t!]
	\caption{GenLeaderElection($V,z$)}
	\label{alg:leader}
	\begin{algorithmic}[1]
    \State For each $v\in V$: $cand(v)\gets true$; 
    \For{$i=1,\ldots,\log n+1$}\Comment{Elimination}
        \For{$j,k\in[0,2]$}
            \State Execute $S$ twice on the set: 
\Comment{$S$ is $(\cI,d)$-ssf of length $O(d^2\log \cI)$, $d$ large enough~\cite{ClementiMS01}} 
            \State \ \ $\{w\in V\,|\,cand(w)=true, w\in C_z(j',k')
			\mbox{ such that } (j',k')\equiv(j,k)\mod 2\}$;
            \State Each $w\in V$ determines and stores $X_w$ during
			the first execution of $S$ and 
		\State \ \ \ $X_v$, for each
			$v\in X_w$, during the second execution of $S$, where
		 \State \ \ \  $X_u$ is the set of nodes from box of $u$ 
                heard by $u$ during execution of $S$ on $V$; 
            \For{each $v\in V$}
                \State $u\gets \min(X_v)$
                \If{$X_v=\emptyset$ or $v>\min(X_u\cup\{u\})$}
                    \State $cand(v)\gets false$; $ph(v)\gets i$
                \EndIf
            \EndFor
        \EndFor
    \EndFor

    \State For each $v\in V$: $state(v)\gets active$ \Comment{Selection}
    \For{$i=\log n,(\log n)-1,\ldots,2,1$}
        \State $A_i\gets \{v\in V\,|\, ph(v)=i, state(v)=active\}$
        \State $V_i\gets$ GranLeaderElection($A_i,n/z,z$)
        \State $\lambda\gets 1-\sqrt{2}z$ \tj{\Comment{$V_i$ -- new leaders}}
        \State For each $v\in V_i$: $state(v)\gets leader$ 
        \State DilutedTransmit($V_i,z,d$) for $d=(d_\alpha(n)/\lambda)^{1/\alpha}$ 
        \State For each $v\in V$ which can hear $u\in\boxx(v)$ 
		\tj{during DilutedTransmit($V_i,z,d$): 
$state(v)\gets passive$}
    \EndFor
    \end{algorithmic}
\end{algorithm}

{\leaddesc}

\begin{theorem}
\labell{t:leader:general} 
Algorithm GenLeaderElection$(V,z)$ chooses a leader in each box of $G_z$ containing
at least one element of $V$ in
$O(\log^2 n)$ rounds,
provided $\alpha>2$ and $\lambda=1-\sqrt{2}z>0$ \tj{are constant}.
\end{theorem}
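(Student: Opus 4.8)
The plan is to prove Theorem~\ref{t:leader:general} in two parts, correctness and round complexity, both resting on the three invariants of the \emph{Elimination} stage already announced in the description of Algorithm~\ref{alg:leader}: the halving bound $|V_C(l+1)|\le |V_C(l)|/2$, the bound on the granularity of the last nonempty layer $V_C(l_C^\star)$ (below $n/z$), and their consequence $V_C(l)=\emptyset$ for $l\ge\log n$. I would first establish these invariants per box, then use them to show that \emph{Selection} assigns exactly one leader to each nonempty box of $G_z$, and finally count rounds.

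For the invariants I would analyze a single block over a fixed box $C$ of $G_z$. Two ingredients combine. Geometrically, since $z<1/\sqrt2$ every box has diameter $<1=r$, so same-box stations are within communication range; and since $\alpha>2$, the number of simultaneously transmitting stations (from $C$, and from other boxes kept apart by the parity dilution in the inner loop) that can jointly block a reception is bounded by a constant, which I fix as the selectivity $d$ of the $(\cI,d)$-ssf $S$. Hence, once the local density of candidates is below $d$, every candidate isolated by $S$ against its nearest competitors is heard by all same-box candidates, and the two executions of $S$ correctly populate $X_w$ and all $X_u$ with $u\in X_w$. The elimination rule then makes a surviving $v$ a \emph{local minimum}: survival requires $v\le\min(X_u\cup\{u\})\le\min X_v$ for $u=\min X_v$, so $v$ is smaller than every same-box station it hears. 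When same-box candidates hear each other mutually, the station $u=\min X_v$ heard by a survivor is itself eliminated (it hears the smaller $v$), and the map $v\mapsto\min X_v$ is injective on survivors (two survivors mapping to a common $u$ would both be the minimum of $u$'s heard-neighborhood). This injects survivors into eliminated stations and yields $|V_C(l+1)|\le|V_C(l)|/2$; isolated candidates ($X_v=\emptyset$) are eliminated outright, so $V_C$ shrinks to $\emptyset$ and $V_C(l)=\emptyset$ for $l\ge\log n$, forcing $ph(v)\le\log n$ for all $v$. For the granularity bound I would argue that two candidates closer than $z/n$ reliably hear each other in every block, so the larger is removed before the process terminates; hence $V_C(l_C^\star)$ is $(z/n)$-separated, and with $r=1$ its granularity is below $n/z$. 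The dense regime, where more than $d$ candidates cluster so tightly that $S$ cannot isolate them (and hearing may become asymmetric), is exactly the case the granularity invariant must rule out, and I expect this coupling of halving and separation to be the main obstacle.

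With the invariants in hand, correctness of \emph{Selection} follows from processing blocks in decreasing order $i=\log n,\dots,1$. For a box $C$ the critical index is $i=l_C^\star+1$, at which $A_i\cap C=V_C(l_C^\star)$ has granularity below $n/z$; therefore the simultaneous call GranLeaderElection$(A_i,n/z,z)$ runs with valid parameters and, by Theorem~\ref{t:leader:gran}, elects exactly one leader in $C$. That leader then reaches every remaining active station of $C$ during DilutedTransmit$(V_i,z,d)$ (by the diluted-transmission guarantee, cf.\ Proposition~\ref{prop:diluted:transmit}), turning them \emph{passive}, so no further leader of $C$ is created at smaller $i$ and every box nonempty in $V$ ends with exactly one leader. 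The delicate point to verify is that running GranLeaderElection simultaneously across all boxes—including boxes whose $A_i$ is momentarily too dense—never fixes a wrong leader in $C$ before index $l_C^\star+1$; here the decreasing order and the passivation step are essential, and this bookkeeping is the most intricate part of the correctness argument.

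Finally I would tally the rounds. In \emph{Elimination} the outer loop runs $O(\log n)$ times, the inner loop a constant number of times, and each inner step executes $S$ of length $O(d^2\log\cI)=O(\log n)$ twice (as $d$ is constant and $\log\cI=O(\log n)$), giving $O(\log^2 n)$. In \emph{Selection} the loop runs $O(\log n)$ times; each iteration calls GranLeaderElection$(A_i,n/z,z)$, costing $O((1/\lambda+\log(gz))\,d_\alpha(n))=O(\log n)$ by Theorem~\ref{t:leader:gran} since $g=n/z$ gives $gz=n$, $d_\alpha(n)=O(1)$ for $\alpha>2$, and $\lambda=1-\sqrt2 z$ is constant, plus a DilutedTransmit of $d^2=O(1)$ rounds. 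Thus \emph{Selection} also costs $O(\log^2 n)$, and the two stages together give the claimed bound. Once the two invariants of the second paragraph are in place the remainder is assembly; proving them is the hard part.
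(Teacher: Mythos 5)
Your skeleton---the two per-box Elimination invariants, the decreasing-order Selection with passivation, and the round count---coincides with the paper's proof, and your Selection and complexity paragraphs would assemble correctly on top of the invariants. The genuine gap is precisely in the part you yourself call ``the hard part'': the SINR-theoretic guarantee that makes the invariants true in \emph{dense} boxes. You key reception on the precondition that ``the local density of candidates is below $d$'' (the selectivity of the $(\cI,k)$-ssf). But during Elimination a box may hold far more than $d$ candidates for essentially all of the $\log n$ blocks---halving is the conclusion you are trying to prove, so sparsity cannot be used as a hypothesis. The paper's machinery (Proposition~\ref{prop:for:sel}, Corollary~\ref{cor:selector}, Proposition~\ref{prop:closer}) is scale-invariant rather than density-bounded: if $\sqrt{2}x$ denotes the minimum distance among a box's current candidates, then by definition at most one candidate lies in each box of the \emph{rescaled} grid $G_x$; a constant-selectivity ssf therefore isolates a member of the per-box closest pair within its constant-size $G_x$-neighborhood, the total interference of all farther candidates is bounded by a convergent sum because $\alpha>2$, and a ``movement'' argument (conceptually relocating all out-of-box stations into boxes adjacent to $C$, which only increases interference) reduces the general case to this one. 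The guarantee so obtained concerns only the per-box \emph{closest} pair, but it holds no matter how many candidates the box contains.

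This matters because the statement you substitute for it---any two candidates closer than $z/n$ reliably hear each other in every block---is false, not merely unproven: if candidates are packed at a scale $x\ll z/n$ around a receiver, then after silencing any constant-size set the residual interference is of order $x^{-\alpha}$ (up to constants), while the signal of a sender at distance $z/n$ is of order $(z/n)^{-\alpha}$, so the SINR is roughly $(nx/z)^{\alpha}\to 0$; signal and residual interference balance only at the minimum-distance scale, which is why the guarantee cannot be extended beyond the closest pair. Since your separation invariant for the last layer $V_C(l_C^\star)$ rests on this claim (``the larger is removed before the process terminates''), and your halving argument is explicitly conditioned on mutual hearing that you cannot ensure in the dense/asymmetric regime, both invariants remain unproven in your proposal. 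The paper instead proves invariant (ii) contrapositively from the closest-pair guarantee: were the minimum distance within $V_C(l)$ at most $z/n$, the closest pair would exchange messages during block $l+1$ and a candidate would survive, so $V_C(l+1)\neq\emptyset$ (Proposition~\ref{prop:lead:empty}); and it proves invariant (i) by a combinatorial matching argument on the elimination rule, stated without any reception hypothesis. You correctly located the difficulty, but the idea that resolves it---rescaling to $G_x$ at the minimum-distance scale together with the movement argument---is absent from your proposal.
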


\vspace*{-2ex}
\section{Broadcasting Algorithms}
\label{s:broadcast}


\newcommand{\asleep}{asleep}
\newcommand{\activ}{active}
\newcommand{\passive}{passive}

We first describe a generic algorithm DetBroadcast, which uses leader election protocol
in boxes of grid $G_{z}$ for $z=\eps'/\sqrt{2}$, where $\eps'=\eps/2$, as a subroutine
(recall that $\eps$ is the constant defining the communication graph).
The performance of the algorithm is estimated in two variants:
the first 
in which network granularity is known (and GranLeaderElection is applied),
and the second which uses
GenLeaderElection and does not depend on network granularity.

Let $\gamma'=(1-\eps')/(2\sqrt{2})$.
At the beginning of the algorithm, all stations except of the source $s$ are in the
state {\asleep} (states of stations in broadcasting algorithm are independent of their
states during their calls to leader election subroutines).
In the first round of DetBroadcast, the source sends a message to all
stations in its range area; these stations become {\activ}) while the source
changes its state to {\passive}.
Then, the algorithm works in stages $1,2,3,\ldots$, where the stage $i$ consists of:
\begin{itemize}
\vspace*{-0.3ex}
\item
{\em one execution} of the
leader election procedure GenLeaderElection($V_i,z$)
or \\ GranLeaderElection($V_i,g,z$), where $z=\eps'/\sqrt{2}$ and $V_i$ is the set of station in state
{\activ} at the beginning of the stage,
followed by
\vspace*{-0.3ex}
\item
{\em $(\gamma'/\eps')^2$ applications} of
DilutedTransmit($V'_{i,a,b},\gamma',d$) indexed by pairs $(a,b)\in [0,d-1]^2$, where
$V'_i$ are the leaders of boxes of $G_z$ chosen from $V_i$,
$V'_{i,a,b}$ are elements of $V'_i$ with grid coordinates \tj{(with respect to $G_z$}
equal to $(a,b)$ modulo $d$
\tj{and $d=d_\alpha(n)/(2\gamma')$}.
\end{itemize}
The goal of these \tj{``diluted''} applications of DilutedTransmit is that
leaders of boxes of $G_{z}$ (acting as leaders of boxes of $G_{\gamma'/2}$)
send messages
to all neighbors (in the communication
graph) of all stations
from their boxes of $G_{z}$. In order to achieve this goal, it is sufficient that leaders
transmit $(1-\eps')$-successfully.
At the end of stage $i$, all stations in state {\em active} become {\em passive} and
all stations in state asleep, which received the broadcast message during stage $i$, change
state to {\em active}.


Below we present a pseudo-code of a stage of the broadcasting algorithm
DetBroadcast.
\begin{algorithm}[H]
	\caption{StageOfBroadcast
\Comment{a single stage of algorithm DetBroadcast}}
	\label{alg:stage}
	\begin{algorithmic}[1]
    \State $\eps'\gets \eps/2$; $\gamma'\gets 1-\eps'$; $z\gets \eps'/\sqrt{2}$
    \State $l\gets \lceil \gamma'/\eps'\rceil$
    \State $V\gets$ stations in state {\em \activ}
    \State Run leader election sub-routine: either GenLeaderElection($V,z$) or GranLeaderElection($V,g,z$)
    \State $V'\gets$ leaders chosen during the leader election in line 4
    \For{each $(a,b)\in[0,l-1]^2$}
        \State $V'_{a,b}=\{v\in V'\,|\, G_z(v)\equiv (a,b)\mod l\}$
        \tj{\State $d\gets (d_{\alpha}(n)/\eps')^{1/\alpha}$ \Comment{$d_{\alpha}$ from Prop.~\ref{prop:lead1}}
        \State DilutedTransmit($V'_{i,a,b},(1-\eps')/(2\sqrt{2}),d$)}
     \EndFor
     \State for each $v$: if $state(v)={\activ}$: $state(v)\gets \passive$
     \State for each $v$: if $state(v)=\asleep$ and $v$ received the broadcast message: $state(v)\gets \activ$
    \end{algorithmic}
\end{algorithm}


\begin{lemma}
\label{l:detbroadcast}
Algorithm DetBroadcast accomplishes broadcasting in $O(D)$ stages,
provided the leader election sub-routine in line 4 of StageOfBroadcast correctly elects leaders in all boxes
of grid $G_x$.
\end{lemma}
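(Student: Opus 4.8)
The plan is to reduce the claim to a single per-node inequality and then invoke the definition of eccentricity. For $v\neq s$ let $\tau(v)$ be the index of the first stage in which $v$ becomes informed, counting the initial lone transmission of the source as stage $0$, and let $\dist_G(\cdot,\cdot)$ denote distance in the communication graph. I would prove $\tau(v)\le \dist_G(s,v)-1$ by induction on $\dist_G(s,v)$. Since the eccentricity of the source is $D$, every node lies at graph distance at most $D$, so this inequality forces every node to be informed by the end of stage $D-1$, i.e. within $O(D)$ stages. The only non-combinatorial ingredient needed for the induction is a coverage fact, which I isolate first.

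\textbf{Coverage fact.} I would show that in any stage \emph{every elected leader informs all communication-graph neighbors of every station lying in its own box of $G_z$}. The geometry is immediate: with $z=\eps'/\sqrt2$ and $\eps'=\eps/2$, a box of $G_z$ has diagonal $z\sqrt2=\eps'$, so the leader $\ell$ of a box $C$ and any station $u\in C$ satisfy $\dist(\ell,u)\le\eps'$; if $w$ is a neighbor of $u$ then $\dist(u,w)\le 1-\eps$, whence $\dist(\ell,w)\le \eps'+(1-\eps)=1-\eps'$ by the triangle inequality. Thus it suffices that each leader transmits $(1-\eps')$-successfully, and this is exactly what the $(\gamma'/\eps')^2$ diluted applications are designed to deliver: the outer partition of the leaders by their $G_z$-coordinates modulo $l=\lceil\gamma'/\eps'\rceil$, combined with the internal dilution carried out inside DilutedTransmit on the grid $G_{\gamma'}$ with $\gamma'=(1-\eps')/(2\sqrt2)$, is meant to guarantee that the set of leaders transmitting in any one round meets the ``at most one station per box of $G_{\gamma'}$'' hypothesis of Proposition~\ref{prop:diluted:transmit}; that proposition then yields $(2\sqrt2\,\gamma')=(1-\eps')$-successful transmission for each of them. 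I expect this to be the main obstacle: it is the only point where the SINR geometry and the precise parameter choices $z,\gamma',l,d$ enter, and one must check that the two levels of dilution jointly suppress interference so that \emph{all} leaders succeed simultaneously, not merely in isolation.

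\textbf{Induction on graph distance.} For the base case, a node $v$ with $\dist_G(s,v)=1$ lies within Euclidean distance $1-\eps\le 1$ of the source; since the source transmits alone in stage $0$ it reaches its whole range, so $\tau(v)=0=\dist_G(s,v)-1$. For the inductive step, take $v$ with $d:=\dist_G(s,v)\ge 2$ and choose a neighbor $u$ with $\dist_G(s,u)=d-1$. By the inductive hypothesis $\tau(u)\le d-2$, and since each node is in state \activ\ during exactly the stage following the one in which it is informed, $u$ is \activ\ throughout stage $\tau(u)+1\le d-1$. During that stage $u$'s box of $G_z$ contains an active station, so by the hypothesis of the lemma the leader election of line~4 elects a leader in it, and by the coverage fact that leader informs $v$ (a neighbor of $u$). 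Hence $\tau(v)\le \tau(u)+1\le d-1$, completing the induction.

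\textbf{Conclusion.} Because every node is at graph distance at most $D$ from the source, the inequality gives $\tau(v)\le D-1$ for all $v$, so all nodes are informed by the end of stage $D-1$. Termination of the bookkeeping then follows from the state dynamics: a node is \activ\ for a single stage and turns \passive\ afterwards, so once no station is freshly informed the active frontier is empty and the protocol stops. Therefore DetBroadcast accomplishes broadcasting in $O(D)$ stages, as claimed.
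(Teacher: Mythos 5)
Your proposal is correct and follows essentially the same route as the paper: your ``coverage fact'' is precisely the paper's Fact~\ref{f:boxall} (which the paper states and combines with the $(1-\eps')$-successfulness of the diluted transmissions, guaranteed by Proposition~\ref{prop:lead1}/Proposition~\ref{prop:diluted:transmit}), and your induction on graph distance is a formalized version of the paper's frontier-advancement argument based on the asleep/active/passive state dynamics. The only difference is presentational: you make explicit the triangle-inequality computation and the per-stage invariant $\tau(v)\le \dist_G(s,v)-1$ that the paper leaves informal.
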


\begin{proof}
We first formulate an essential fact for correctness of our broadcasting algorithm, which
easily follows from the definition of a reachability graph.

\begin{fact}\labell{f:boxall}
Let $\eps'=\eps/2$ for $\eps<1$.
If a station $v$ from a box $C$ of a grid $G_x$ for $x\leq \eps/(2\sqrt{2})$ transmits a message
$(1-\eps')$-successfully then its message is received by all neighbors (in the
reachability graph) of all stations located in $C$.
\end{fact}

Each station $v$ which receives the broadcast message for the
first time at stage $j$, changes its state from {\asleep} to {\activ} at the end of stage $j$.
Then, at the end of stage $j+1$, such station $v$ changes its state from {\activ} to {\passive}.
In each stage, only (and exactly) the stations in state {\activ} take part as transmitters
in leader election and DilutedTransmit.
Fact~\ref{f:boxall} guarantees that, if a station $v$ is in the state {\em \activ} during
stage $j$, then all its neighbors in the reachability graph receive the broadcast message 
during DilutedTransmit (in line 9 of StageOfBroadcast)
\tj{in stage $j$. 
Therefore all neighbors of $v$ (in the reachability graph) are in the state {\activ} in stage $j+1$ or earlier. This implies that broadcasting
is finished after $O(D)$ applications of Algorithm~\ref{alg:stage}.}
\end{proof}

Let DetGenBroadcast and DetGranBroadcast denote the broadcasting algorithm using 
GenLeaderElection and GranLeaderElection, respectively, in line 4 of StageOfBroadcast.
Time performances of these leader election protocols, together with Lemma~\ref{l:detbroadcast}, 
imply the following results.

\begin{theorem}\labell{t:broadgen}
Algorithm DetGenBroadcast accomplishes broadcast in $O(D\log^2 n)$ rounds, provided
$\alpha>2$, \tj{$\eps<1/2$ are constant.}
\end{theorem}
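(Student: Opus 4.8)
The plan is to combine the two main building blocks that are now in place: Lemma~\ref{l:detbroadcast}, which reduces the correctness and round complexity of \texttt{DetGenBroadcast} to a bound on the cost of a single stage, and Theorem~\ref{t:leader:general}, which controls the dominant subroutine inside a stage. Since Theorem~\ref{t:broadgen} is a near-immediate corollary of these two results, the proof is essentially a matter of assembling them and verifying that the $O(\log^2 n)$ factor from the leader election stage is indeed the bottleneck, while checking that the hypotheses ($\alpha>2$ and $\eps<1/2$ constant) propagate correctly into the parameter $\lambda$ used by the subroutines.

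\medskip

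\noindent\textbf{Step 1 (number of stages).} First I would invoke Lemma~\ref{l:detbroadcast} to conclude that \texttt{DetGenBroadcast} accomplishes broadcasting in $O(D)$ stages, \emph{provided} each call to the leader election subroutine in line~4 of \texttt{StageOfBroadcast} correctly elects a leader in every nonempty box of $G_z$ for $z=\eps'/\sqrt{2}$ with $\eps'=\eps/2$. Here I would record that for $\texttt{DetGenBroadcast}$ the subroutine is \texttt{GenLeaderElection}, and that the hypothesis of the lemma is met precisely by the correctness guarantee of Theorem~\ref{t:leader:general}. To apply that theorem I must verify $\lambda=1-\sqrt{2}z>0$ is a positive constant: substituting $z=\eps'/\sqrt{2}=\eps/(2\sqrt{2})$ gives $\sqrt{2}z=\eps/2$, hence $\lambda=1-\eps/2$, which is a constant bounded away from $0$ exactly because $\eps<1/2$ (indeed $\lambda>3/4$). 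Thus the standing assumptions of the theorem translate into the stated hypotheses of the present statement.

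\medskip

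\noindent\textbf{Step 2 (cost of a single stage).} Next I would bound the round complexity of one execution of \texttt{StageOfBroadcast}. A stage consists of one call to \texttt{GenLeaderElection}$(V,z)$ followed by $l^2=\lceil\gamma'/\eps'\rceil^2=O(1)$ applications of \texttt{DilutedTransmit} with dilution parameter $d=(d_\alpha(n)/\eps')^{1/\alpha}$. By Proposition~\ref{prop:diluted:transmit} each \texttt{DilutedTransmit} costs $d^2=O(d_\alpha(n))=O(1)$ rounds for $\alpha>2$ (since $d_\alpha$ is flat and $\eps'$ is constant), so the entire batch of diluted transmissions contributes only $O(1)$ rounds per stage. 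By Theorem~\ref{t:leader:general} the leader election call costs $O(\log^2 n)$ rounds, which therefore dominates the stage. Hence each stage runs in $O(\log^2 n)$ rounds.

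\medskip

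\noindent\textbf{Step 3 (multiply).} Finally, multiplying the $O(D)$ stages of Step~1 by the $O(\log^2 n)$ per-stage cost of Step~2 yields the claimed $O(D\log^2 n)$ total round complexity, under the assumptions $\alpha>2$ and $\eps<1/2$ constant. \textbf{The main obstacle} I anticipate is not the arithmetic but the bookkeeping at the seam between the two theorems: I must make sure that the value of $z$ (and hence of $\lambda$) fed to \texttt{GenLeaderElection} inside a broadcasting stage is exactly the one for which Theorem~\ref{t:leader:general} was proved, and that the state-transition accounting in Lemma~\ref{l:detbroadcast} (the \texttt{active}/\texttt{passive}/\texttt{asleep} cycle and Fact~\ref{f:boxall}) genuinely guarantees one hop of progress per stage so that $O(D)$ stages suffice. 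Once these interface conditions are confirmed, the complexity bound follows by the simple product above.
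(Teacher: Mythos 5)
Your proposal is correct and follows essentially the same route as the paper: the paper's proof likewise combines Lemma~\ref{l:detbroadcast} (giving $O(D)$ stages) with Theorem~\ref{t:leader:general} (giving the dominant $O(\log^2 n)$ per-stage cost of GenLeaderElection), noting that the remaining per-stage work is constant because the relevant ratios involving $\eps'$ and $\lambda$ are constant. Your Steps 1--3 merely spell out this assembly in more detail, including the correct verification that $z=\eps/(2\sqrt{2})$ yields $\lambda=1-\eps/2>3/4$.
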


\begin{proof}
The result holds by applying Lemma~\ref{l:detbroadcast} together with Theorem~\ref{t:leader:general}
regarding performance of algorithm GenLeaderElection used for leader election in line 4 of 
StageOfBroadcast,
and by using the facts that the size of $S$
is $O(\log \cI)$ and that $\lambda'/\eps'$ is constant. 
\end{proof}

\begin{theorem}\labell{t:broadgran}
Algorithm DetGranBroadcast accomplishes broadcast in 
$O(D(1/\eps^3+\log g)d_{\alpha}(n))=O(D\log g)$ rounds, for
constant parameters 
$\alpha>2$
and~$\eps<1/2$.
\end{theorem}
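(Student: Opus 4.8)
The plan is to mirror the proof of Theorem~\ref{t:broadgen}, substituting the granularity-dependent leader election routine and its sharper running time from Theorem~\ref{t:leader:gran} for the general one. First I would verify the hypothesis of Lemma~\ref{l:detbroadcast}: in StageOfBroadcast the leader election in line~4 is called on the grid $G_z$ with $z=\eps'/\sqrt{2}$ and $\eps'=\eps/2<1$, hence $z<1/\sqrt{2}$, so Theorem~\ref{t:leader:gran} applies and GranLeaderElection$(V,g,z)$ correctly elects a leader in every nonempty box of $G_z$. With correctness of the subroutine established, Lemma~\ref{l:detbroadcast} immediately yields that DetGranBroadcast completes broadcasting within $O(D)$ stages; it remains only to bound the cost of a single stage and multiply.

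For the per-stage cost I would separate the two contributions in StageOfBroadcast. The leader election (line~4) runs in $O((1/\lambda+\log(gz))\,d_{\alpha}(n))$ rounds by Theorem~\ref{t:leader:gran}, where $\lambda=1-\sqrt{2}z=1-\eps'=1-\eps/2$. Since $\eps<1/2$ we have $\lambda>3/4$, so $1/\lambda=O(1)$, and as $z<1$ we have $\log(gz)\le\log g$; thus this part costs $O((1+\log g)\,d_{\alpha}(n))$. The transmission part consists of the loop over $(a,b)\in[0,l-1]^2$ with $l=\lceil\gamma'/\eps'\rceil=O(1/\eps)$, i.e.\ $l^2=O(1/\eps^2)$ applications of DilutedTransmit, each of which (line~9 with $d=(d_{\alpha}(n)/\eps')^{1/\alpha}$) runs for $d^2=(d_{\alpha}(n)/\eps')^{2/\alpha}$ rounds. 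Here I would use $\alpha>2$, so that $2/\alpha<1$ and hence $(d_{\alpha}(n)/\eps')^{2/\alpha}\le d_{\alpha}(n)/\eps'$ (the base being at least $1$); the whole loop therefore costs $O\!\left(\eps^{-2}\cdot d_{\alpha}(n)/\eps'\right)=O(d_{\alpha}(n)/\eps^{3})$ rounds.

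Adding the two contributions gives an $O\big((1/\eps^{3}+\log g)\,d_{\alpha}(n)\big)$ bound per stage, and multiplying by the $O(D)$ stages from Lemma~\ref{l:detbroadcast} yields $O\big(D(1/\eps^{3}+\log g)\,d_{\alpha}(n)\big)$, the first claimed expression. Finally, for constant $\alpha>2$ the function $d_{\alpha}$ is flat, so $d_{\alpha}(n)=O(1)$, and for constant $\eps<1/2$ the term $1/\eps^{3}=O(1)$; absorbing both constants collapses the bound to $O(D\log g)$. The main thing to get right is the transmission accounting: one must check that the $l^2$ index groups $V'_{a,b}$ are exactly the $l$-diluted subclasses of the leader set, so that each DilutedTransmit call has at most one transmitter per box and transmits $(1-\eps')$-successfully (via Proposition~\ref{prop:diluted:transmit}, with Fact~\ref{f:boxall} then delivering the message to all neighbors), and that the exponent $2/\alpha$ is tracked carefully enough to produce $1/\eps^{3}$ rather than a larger power of $1/\eps$.
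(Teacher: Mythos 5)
Your proposal is correct and follows essentially the same route as the paper's proof: bound the leader election cost per stage via Theorem~\ref{t:leader:gran} (with $\lambda=1-\eps/2$ constant), bound the transmission loop by $l^2 d^2 = O(d_{\alpha}(n)/\eps^3)$ using $2/\alpha<1$, and multiply by the $O(D)$ stages from Lemma~\ref{l:detbroadcast}. The only difference is that you spell out details the paper leaves implicit (the exponent bookkeeping and the correctness hypothesis of the lemma), which is fine.
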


\begin{proof}
Complexity of GranLeaderElection for $z=\eps'/\sqrt{2}$ is $O(d_\alpha(n)\log g)$, since $1-\sqrt{2}z$ is larger
than $1/3$, see Theorem~\ref{t:leader:gran}. Then, $l=\Theta(1/\eps)$, and $d=\Theta((d_\alpha(n)/\eps)^{1/\alpha})$. Therefore,
the for-loop works in $O((1/\eps)^3 d_{\alpha}(n))$ rounds. Combining this with Lemma~\ref{l:detbroadcast}
yields the theorem.
\end{proof}

\vspace*{-3ex}
\section{Model with Randomly Disturbed SINR}
\labell{s:random-SINR}

In this section we show simple modifications of original procedures and 
algorithms from Sections~\ref{s:leader} and~\ref{s:broadcast},
and argue that their performance in the model with randomly disturbed SINR is
bigger by factor $O(\log_{1/\zeta} n)$ than the performance of the original versions
analyzed in the model with opportunistic links in Sections~\ref{s:leader}
and~\ref{s:broadcast}. For simplicity, whenever we discuss original algorithms,
they are understood to be analyzed in the opportunistic links model,
while with respect to the modified algorithms, we assume that they are
studied in the randomly disturbed SINR model.

We emulate each round of the original algorithms by 
$\tau=\Theta(\log_{1/\zeta} n)$ consecutive rounds, and we call them a {\em phase}.
That is,
each round of the original algorithms, 
which we call {\em original round}, 
is replaced by a single phase containing $\tau$ rounds.
Each node transmitting in an original round 
transmits in all $\tau$ rounds of the corresponding phase.
However, the local computation done after 
receiving the signal from the wireless medium in the original round
is done only once in the corresponding phase---after receiving the signal
from the wireless medium in the final round $\tau$ of the phase.

Note that there are more possibilities of receiving messages in a phase, comparing with
the corresponding original round, due to random disturbances of SINR ratios.
Therefore, in all phases of the modified protocols, 
each node ignores all messages successfully received from nodes of distance bigger than $(1-\ep)r$
from it, where $\ep\in (0,1)$ is defined in such a way that the randomly modified SINR ratio of two nodes
of distance at most $(1-\ep)r$ is above the threshold $\beta$ with probability at least $\zeta$
(note that $\ep$ depends on all parameters $\alpha,\beta,\eta,\zeta$).
In fact, in the analysis of the original algorithms in Sections~\ref{s:leader} and~\ref{s:broadcast}
we measured progress only in terms of such opportunistic transmissions between nodes
of distance at most $(1-\ep)r$ from each other; therefore if
we explicitly ignore any other (faraway) transmissions in the modified algorithms,
we receive the same feedback (from the wireless channel) in phases as in the corresponding original 
rounds, with high probability (whp). (The exact probability is at least $1-n^{-c}$, where
$c>1$ is a constant depending on the constant hidden in ``$\Theta$'' notation in the definition of $\tau$.)
This is because of three facts: 
(a) ignoring messages from nodes of distance bigger than $(1-\ep)r$ allows to focus on the same neighbors
as in the progress analysis of the original protocols;
(b) the probability that a node does not receive a message from another node of distance at most 
$(1-\ep)r$ from it in any round of a given phase, provided it received it in the corresponding original round,
is at most $\zeta^\tau$, and
(c) sufficiently large parameter $\tau$ makes the probability small enough inverse of polynomial in $n$
in order to be able to use union bounds of events over all nodes and rounds when transforming the original
analysis.

In the Appendix we show that local computation done by original algorithms 
can be directly transformed into the modified versions of these algorithms as well;
this is because they are designed to assure a high level of knowledge consistency,
and because the messages received in the original executions are also received in the
executions of modified algorithms, whp (as showed above). 
Thus, enhancing the results in Theorems~\ref{t:broadgen} and~\ref{t:broadgran} 
by additional factor $\tau=O(\log n)$ coming from simulating each original round
by a phase of $\tau$ rounds,
we obtain the following result.

\comment{
Our goal is 
Therefore we show how to adapt local computations used in rounds of original algorithms
in Sections~\ref{s:leader} and~\ref{s:broadcast} to corresponding phases, 
in such a way that the measurable results of these local computations in original rounds 
(proved in the model with opportunistic links)
are the same as in corresponding phases (proved in the model with random disturbances),
with high probability. This will be enough to argue, after applying a union bound
inequality over all original rounds in an execution of original algorithm,
that the executions of the modified algorithms mimic the corresponding executions
of original algorithms, and thus the broadcasting task is accomplished  by the modified algorithms
with performance growing by factor $O(\log_{???} n)$, with high probability
in the model with random disturbances. Note that although the modified algorithms are also
deterministic, their performance is given only with high probability, as the executions
contain random factors (i.e., disturbances of SINR) incurred by the considered model.
}

\def\TRANSFORM{
Below we analyze each type of the original rounds (of algorithms in Sections~\ref{s:leader} and~\ref{s:broadcast}), and argue that they can be transformed to the model with random disturbances
of SINR when applying the general transformation described in Section~\ref{s:random-SINR}.

\paragraph{Algorithm~\ref{alg:diluted}: DilutedTransmit.}
This procedure does not contain any specific local computation, only transmission
pattern (each round of which is simply copied $\tau$ times, as specified above in the definition
of a phase).

\paragraph{Procedure LeadIncrease.}
See the proof of Proposition~\ref{prop:lead2} regarding specification of the original procedure 
LeadIncrease($A,x,\lambda$).
In this procedure, leaders of smaller boxes of size $x$ run procedure DilutedTransmit, and at the end 
the smallest of other at most 3 leaders within the larger box of size $2x$ 
(containing 4 smaller boxes in total) elects itself as the leader of this box,
while the others know it. 
In the model with randomly disturbed SINR, by the observation above,
with high probability each leader of smaller box receives ids of all other (at most 3) leaders
of small boxes within the larger box, and therefore all (at most 4) of them select
the same leader among themselves using the smallest-id rule as in 
the original procedure LeadIncrease.

\paragraph{Algorithm~\ref{alg:gran}: GranLeaderElection.}
This algorithm simply iterates modified procedure LeadIncrease, which gives the same
result as the original LeadIncrease, whp,
with respect to exponentially growing
boxes, and no new local computation rules are used.

\paragraph{Algorithm~\ref{alg:leader}: GenLeaderElection.}
It contains two parts: elimination and election. In the elimination part,
the local computation proceeds with checking conditions (lines 5, and 11)
and updating variables (lines 10 and 12). A straightforward inductive argument
over the number of runs of the internal part of the loop, lines 4-13, 
guarantees that, with high probability, sets $X_u$ are the same, and thus the values of 
variables $cand$ and $ph$ are the same as in the corresponding original round.
Based on them, transmissions are scheduled, which are again the same, whp.
In the selection part, only the already computed variables $ph$ are taken into account
(and we argued that they are the same as in the original execution, whp), and
based on them the modified procedures GranLeaderElection and DilutedTransmit are executed,
about which we also argued that they yield the same results as the original ones.

\paragraph{Generic algorithm DetBroadcast.}
It is sufficient to examine the specification of a single stage of the original algorithm,
given in Algorithm~\ref{alg:stage}, StageOfBroadcast.
In a stage, only a leader election is run once (either GranLeaderElection or GenLeaderElection),
which, as we showed, works the same in its original and modified versions, whp.
Additionally, common knowledge (known parameters, location) is used, and
procedure DilutedTransmit is executed a few times (again, its modified
version works the same as the original one, whp). 
We conclude that both implementations of the generic DetBroadcast algorithms ---
one based on GranLeaderElection and the other based on GenLeaderElection ---
work the same in their modified forms as they worked in the original forms, whp.

}

\begin{theorem}
The modified version of algorithm DetGenBroadcast accomplishes broadcast in $O(D\log^3 n)$ rounds, 
and algorithm DetGranBroadcast accomplishes broadcast in 
$O(D\log g\log n)$ 
rounds, with high probability.
\end{theorem}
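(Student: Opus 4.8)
The plan is to invoke the general reduction established in Section~\ref{s:random-SINR} as a black box and simply account for the multiplicative overhead it incurs. Recall that the section already proves the central emulation claim: each \emph{original round} of a deterministic algorithm (analyzed in the opportunistic-links model) is replaced by a \emph{phase} of $\tau=\Theta(\log_{1/\zeta}n)$ consecutive rounds, in which every node that transmitted in the original round transmits throughout the phase, and in which each node ignores any message received from a node at distance exceeding $(1-\ep)r$. The three facts (a)--(c) listed there guarantee that, with probability at least $1-\zeta^\tau$ per (node, original-round) pair, the feedback a node collects at the end of a phase coincides exactly with the feedback it would have collected in the corresponding original round. Thus the first step is to fix $\tau$ so that $\zeta^\tau\le n^{-c'}$ for a sufficiently large constant $c'$, which is possible since $\tau=\Theta(\log_{1/\zeta}n)$; this makes $\tau=O(\log n)$ because $\zeta\in(0,1)$ is a constant.

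Next I would verify that the emulation is faithful not only at the level of channel feedback but also at the level of the local computation performed by the two algorithms, so that the modified executions produce identical variable values (and hence identical future transmission patterns) as the original executions, whp. This is exactly the content of the per-procedure inductive argument contained in the macro \TRANSFORM{} (covering DilutedTransmit, LeadIncrease, GranLeaderElection, GenLeaderElection, and the generic StageOfBroadcast): since each procedure's local updates depend only on the sets $X_u$ of overheard neighbors, and those sets are preserved by the emulation whp, a straightforward induction on the round index shows that the entire state trajectory is preserved whp. The union bound is taken over all nodes and all original rounds in an execution; because the number of original rounds is polynomial in $n$ (it is $O(D\log^2 n)$ or $O(D\log g)$, both polynomial), and each bad event has probability at most $n^{-c'}$, choosing $c'$ large enough keeps the total failure probability below $n^{-c}$ for the desired constant $c>1$.

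Granting the faithful emulation, the round complexities follow by direct multiplication. By Theorem~\ref{t:broadgen}, the original DetGenBroadcast finishes in $O(D\log^2 n)$ original rounds; replacing each by a phase of $\tau=O(\log n)$ rounds yields $O(D\log^2 n)\cdot O(\log n)=O(D\log^3 n)$ rounds whp. Likewise, by Theorem~\ref{t:broadgran} the original DetGranBroadcast finishes in $O(D\log g)$ original rounds, giving $O(D\log g)\cdot O(\log n)=O(D\log g\log n)$ rounds whp. In both cases the broadcast task is accomplished with probability at least $1-n^{-c}$, the ``with high probability'' qualifier reflecting the randomness of the SINR disturbances rather than any randomization in the (still deterministic) algorithm.

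The main obstacle I anticipate is not the arithmetic of the multiplicative overhead but the \emph{consistency of local computation under asymmetric reception errors}. A single missed (or spuriously received) short-range message could, in principle, cause one node to compute a different $X_u$, a different $\min(X_u\cup\{u\})$, or a different $cand$/$ph$ value, and such a discrepancy could cascade through subsequent rounds and corrupt the leader-election invariants (e.g.\ the halving property $|V_C(l+1)|\le|V_C(l)|/2$). The delicate point is therefore to ensure that the \emph{same} high-probability event guarantees correct feedback \emph{simultaneously} for all nodes whose decisions depend on one another within a round, so that the smallest-id tie-breaking and the elimination rules are applied to identical data everywhere; this is precisely why the union bound must range over all (node, original-round) pairs and why $\tau$ must be large enough to absorb the polynomial number of such pairs. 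Once that uniform-consistency event is secured, the rest of the argument is mechanical.
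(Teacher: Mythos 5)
Your proposal is correct and takes essentially the same route as the paper's own proof: emulate each original round by a phase of $\tau=\Theta(\log_{1/\zeta}n)=O(\log n)$ rounds, have receivers ignore messages arriving from beyond distance $(1-\ep)r$, use the per-procedure induction (DilutedTransmit, LeadIncrease, GranLeaderElection, GenLeaderElection, StageOfBroadcast) to show the modified execution reproduces all local variables and hence all transmission patterns whp, apply a union bound over nodes and original rounds, and multiply the bounds of Theorems~\ref{t:broadgen} and~\ref{t:broadgran} by $\tau$. One small blemish: your parenthetical claim that $O(D\log g)$ is polynomial in $n$ is unjustified, since the granularity $g$ is not bounded by any function of $n$, so strictly the union bound for DetGranBroadcast would need $\tau=\Theta\bigl(\log_{1/\zeta}(nD\log g)\bigr)$; however, the paper glosses over the same point, so this does not separate your argument from the paper's.
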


\comment{

\section{Conclusions and Future Work}

In this work we showed the first provably well-scalable deterministic distributed solutions for the broadcast problem in {\em any} uniform wireless network under the physical model
based on SINR with random deviations.
Several novel techniques developed in this work for the purpose of leader election and broadcast,
could be extended for finding scalable solutions to other communication problems.

}


\bibliographystyle{abbrv}



\clearpage

\appendix

\begin{center}
{\LARGE\bf Appendix}
\end{center}



\section{Useful Properties of Diluted Transmissions and the Proof of Proposition~\ref{prop:diluted:transmit}}

First, we define some useful notation and prove one technical proposition.

Let $I_1=[i_1,j_1)$, $I_2=[i_2,j_2)$ be segments
on a coordinate axes, whose endpoints belong to the grid
$G_x$.
The {\em max-distance} between $I_1$ and $I_2$ with respect go $G_x$ is zero when $I_1\cap I_2\neq\emptyset$,
and it is equal to $\min(|i_1-j_2|/x, |i_2-j_1|/x)$ otherwise. Given two rectangles
$R_1$, $R_2$, whose vertices belong to $G_x$, the max-distance $\distM(R_1,R_2)$ between $R_1$
and $R_2$ is equal to the maximum of the max-distances between projections
of $R_1$ and $R_2$ on the axes defining the first and the second dimension in the Euclidean
space.

\begin{proposition}\labell{prop:lead1}
For each $\alpha>2$
and $\lambda<1$,
there exists a flat function $d_{\alpha}(n)$
such that
the following property holds.
Assume that a set of $n$ stations $A$ is $d$-diluted wrt the grid $G_x$, where $x\leq (1-\lambda)/(2\sqrt{2})$ 
and $d\geq \left(d_{\alpha}(n)/\lambda\right)^{1/\alpha}$.
Moreover, $\min_{u,v\in A}(\dist(u,v)\geq\sqrt{2}x)$ (i.e., in particular,
at most one station from $A$ is located in each box of $G_x$). Then,
if all stations from $A$ transmit simultaneously, each of them
is $2\sqrt{2}x$-successful. Thus, in particular, each station from
a box $C$ of $G_x$ can transmit its message
to all its neighbors located in $C$ and in boxes $C'$ of $G_x$ which are adjacent to $C$.
\end{proposition}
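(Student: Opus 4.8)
The plan is to verify the SINR inequality directly at an arbitrary receiver $u\notin A$ with $\dist(v,u)\le 2\sqrt2 x$, for the intended transmitter $v\in A$, against the combined interference of all the other stations of $A$ transmitting at once. Using the normalization $r=1$ and the resulting identity $P=\beta\cN$ fixed in the model section, the requirement $SINR(v,u,A)\ge\beta$ becomes, after multiplying out and dividing by $P$ (so that $\beta\cN/P=1$), the equivalent condition
\[
\dist(v,u)^{-\alpha}\ \ge\ 1+\beta\sum_{w\in A\setminus\{v\}}\dist(w,u)^{-\alpha}.
\]
I would write $S:=\dist(v,u)^{-\alpha}$ for the received signal and $I:=\sum_{w\in A\setminus\{v\}}\dist(w,u)^{-\alpha}$ for the interference, and aim at $S\ge 1+\beta I$. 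For the signal I keep two bounds in play: since $\dist(v,u)\le 2\sqrt2 x\le 1-\lambda$, I have $S\ge(1-\lambda)^{-\alpha}\ge 1+\alpha\lambda$ (Bernoulli), which is the margin above the noise floor, and also $S\ge(2\sqrt2 x)^{-\alpha}$, i.e.\ $x^{-\alpha}\le(2\sqrt2)^\alpha S$, which lets me re-express any bound on $I$ in units of $S$.

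For the interference I would exploit the dilution. Because $A$ is $d$-diluted wrt $G_x$ with at most one station per box, any two stations of $A$ sit in boxes whose coordinates differ by a multiple of $d$ in each axis, hence their Euclidean distance is at least $(d-1)x$; in particular (for $d$ above an absolute constant) no interferer lies within $2\sqrt2 x$ of $v$, and every interferer is at distance at least $\tfrac{dx}{2}$ from $u$. A planar packing argument (disjoint balls of radius $(d-1)x/2$ centered at the stations) bounds the number of interferers within distance $\rho$ of $u$ by $O\big((\rho/(dx))^2\big)$, and grouping them into annuli of width $\sim dx$ yields
\[
I\ =\ O\!\Big(\tfrac{1}{\alpha-2}\,(dx)^{-\alpha}\Big),
\]
the convergence of the underlying series $\sum_{l\ge1}l^{\,1-\alpha}$ being exactly where $\alpha>2$ enters. (For $\alpha=2$ that series is harmonic and must be truncated after the $O(\log n)$ annuli that can hold the at most $n$ stations, which produces the extra $\log n$ and the $\alpha=2$ branch of the flat function.)

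Finally I would combine. From $d\ge(d_\alpha(n)/\lambda)^{1/\alpha}$ I get $d^{-\alpha}\le\lambda/d_\alpha(n)$, and substituting this together with $x^{-\alpha}\le(2\sqrt2)^\alpha S$ into the interference estimate gives $\beta I\le\theta S$ with $\theta=O\big(\lambda/d_\alpha(n)\big)$, the hidden constant depending only on $\alpha,\beta$. Then $S\ge 1+\beta I$ follows from $S(1-\theta)\ge1$, and since $S\ge 1+\alpha\lambda$ it suffices that $\theta\le\alpha\lambda/(1+\alpha\lambda)$; the $\lambda$ factors cancel, so this holds once $d_\alpha(n)$ is a large enough constant, independent of $\lambda$, $x$ and $n$ when $\alpha>2$ — precisely a flat function. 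The closing claim of the statement is then immediate: the farthest point of a box adjacent to $C$ from a point of $C$ lies at distance exactly $2\sqrt2 x$, so a $2\sqrt2 x$-successful transmission reaches every neighbor located in $C$ and in the boxes adjacent to $C$.

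The step I expect to be the main obstacle is making the noise term and the interference term coexist as $\lambda\to0$: the signal margin above noise shrinks like $\alpha\lambda$, so the interference bound must itself carry a matching factor $\lambda$ (furnished by the dilution hypothesis $d\ge(d_\alpha(n)/\lambda)^{1/\alpha}$) for the two $\lambda$'s to cancel and leave only a constant requirement on $d_\alpha(n)$. Carrying out this cancellation without leaking a spurious constant factor into the signal estimate is the delicate point; everything else is a standard interference-from-a-lattice computation enabled by $\alpha>2$.
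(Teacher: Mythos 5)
Your proposal is correct and follows essentially the same route as the paper's proof: lower-bound the signal at distance $2\sqrt{2}x$, bound the interference by summing over annuli of diluted interferers via the convergent series $\sum_{i\ge 1} i^{1-\alpha}$ (which is exactly where $\alpha>2$ and the flat function enter), and let the $\lambda$ supplied by the dilution hypothesis cancel the $\lambda$-sized noise margin coming from $2\sqrt{2}x\le 1-\lambda$. The only differences are cosmetic: you count interferers by disc-packing in annuli centered at the receiver $u$ (handling the $v$-to-$u$ offset by $\dist(w,u)\ge \dist(w,v)-2\sqrt{2}x$), whereas the paper counts boxes in square frames around the box of $v$ and shifts $d$ to $\bar{d}=d-3$ via the triangle inequality for max-distance, and your Bernoulli form $S\ge 1+\alpha\lambda$ replaces the paper's equivalent margin bound $1-(1-\lambda)^{\alpha}\ge\lambda$.
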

%
\begin{proof}
Consider any station $u$ in distance smaller or equal to $2\sqrt{2}x<3x$ to a station $v\in A$.
Then, the signal from $v$ received by $u$ is at least
$$\frac{P}{(2\sqrt{2}x)^{\alpha}}.$$
Now, we would like to derive an upper bound on interferences caused by stations in $A\setminus\{v\}$
at $u$.
Let $C$ be a box of $G_x$ which contains $v$.
The fact that $A$ is $d$-diluted
wrt $G_x$ implies that the number of boxes containing elements of $A$
which are in max-distance
$id$ from $C$ is at most $8(i+1)$ (see Figure~\ref{fig:gran}).
Moreover, no box in distance $j$ from $C$ such
that ($j\mod d\neq 0$) contains elements of $A$.
\begin{figure}
\begin{center}
\epsfig{file=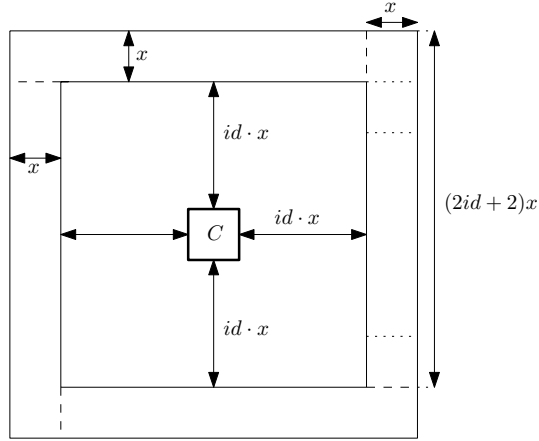, scale=0.8}
\end{center}
\caption{Boxes in distance $id$ from $C$ form a frame partitioned into four
rectangles of size $x\times (2id+2)x$. Each of these rectangles contain at most $i+1$
boxes such that any two of them are in max-distance at least $d$.}
\label{fig:gran}
\end{figure}%
Finally, for a station $v\in C$ and a station $w\in C'$ such that $\distM(C,C')=j$,
the inequality $\dist(v,w)\geq jx$ is satisfied.
Note that our goal is ({\em not}) to evaluate interferences (at $v\in C$, but) at any station
$u$ such that $\dist(u,v)\leq \frac{2\sqrt{2}x}c<3x$. Therefore, $u\in C'$ such that $\distM(C,C')<3$, where
$C'$ is a box of $G_x$.
For a fixed $d$, the total noise and interferences $I$ caused by
all elements of $A\setminus\{v\}$ at any location in $C$ is at most
$$\cN+\sum_{i=1}^{n}8(i+1)\cdot\frac{P}{(idx)^{\alpha}}$$
Since $\distM(C,C')<3$ and $\distM$ satisfies the triangle inequality,
$\distM(C',C'')\geq \distM(C,C'')-3$ for each $C''$, where $u\in C'$. Therefore,
if $d>3$, the noise at $u$ is at most
$$\cN+\sum_{i=1}^{n}8(i+1)\cdot\frac{P}{(i\bar{d}x)^{\alpha}}$$
where $\bar{d}=d-3$.
%
Furthermore,
$$I\leq \cN+ 8\cdot\left(\frac{P}{\bar{d}x}\right)^{\alpha}\cdot\sum_{i=0}^{n}(i+1)^{1-\alpha}\leq \cN+8e_{\alpha}(n)\left(\frac{P}{\bar{d}x}\right)^{\alpha}$$
where $e_{\alpha}(n)=\sum_{i=1}^{n}i^{1-\alpha}=1+\zeta(\alpha-1)$, $\zeta$ is the Riemann zeta function. 
So, 
the signal from $v$ is received at $u$ if the following
inequality is satisfied
\begin{equation}\label{eq:signal}
\beta\left( \cN+8e_\alpha(n)\left(\frac{P}{\bar{d}x}\right)^{\alpha}\right)\leq \left(\frac{P}{2\sqrt{2}x}\right)^{\alpha}
\end{equation}
which can be shown equivalent to
$$\bar{d}^\alpha\geq \frac{8\cdot(2\sqrt{2})^\alpha \beta}{(1-(2\sqrt{2}x)^\alpha)}\cdot e_\alpha(n)$$
using the assumption $P=\beta\cN$. Since $(2\sqrt{2}x)\leq 1-\lambda$ and therefore
$\frac{1}{1-(2\sqrt{2}x)^\alpha}\leq \frac1{\lambda}$,
it is sufficient that
$$\bar{d}_\alpha\geq \frac{8(2\sqrt{2})^\alpha\beta}{\lambda}\cdot e_{\alpha}(n)$$
and, since $\alpha>2$,
that
$$\bar{d}\geq \left(\frac{d_\alpha(n)}{\lambda}\right)^{1/\alpha}$$
where $d_\alpha(n)=2\sqrt{2}\cdot (8\beta)^{1/\alpha}\cdot (e_{\alpha}(n))^{1/\alpha}$.
%
%
\end{proof}

\noindent
If the smallest distance between elements of $V$ is larger than or equal to
$\sqrt{2}x$ (for $x\leq (1-\lambda)/(2\sqrt{2})$) then, according to Proposition~\ref{prop:lead1}, each station $v\in V$
transmits successfully its message to all its neighbors located in boxes of $G_x$
adjacent to $\boxx_x(v)$ during execution of DilutedTransmit($V,x,d$) for $d=(d_\alpha(n)/\lambda)^{1/\alpha}$.

\comment{

\begin{proposition}\labell{prop:diluted:transmit}
Let $V$ be a set of $\leq n$ stations such that there is at most one station in each
box of $G_x$ and $x\leq (1-\delta)/\sqrt{2}$. Then, there exists a flat function
$d_\alpha(n)$ such that each element of $V$ transmits
$2\sqrt{2}x$-successfully during
DilutedTransmit($V,x,\sqrt{d_\alpha(n)}$).
\end{proposition}

}

Using the above Proposition~\ref{prop:lead1}, we can prove Proposition~\ref{prop:diluted:transmit}
from the main part of the paper.
\\

\noindent\textbf{Proof of Proposition~\ref{prop:diluted:transmit}:}
\tj{This proposition is a simple corollary from Proposition~\ref{prop:lead1},}
as DilutedTransmit($V,x,d$) splits $V$ into $d$-diluted subsets.
\qed

%

\comment{

\section{Proof of Proposition~\ref{prop:lead2} (from p.~\pageref{prop:lead2})}

Note that each box of $G_{2x}$ consists of four boxes of $G_x$. Let us fix some labeling of this four
boxes by the numbers $\{1,2,3,4\}$, the same in each box of $G_{2x}$.
Now, assign to each 
station from $A$
the label $l\in[1,4]$ corresponding
to its position it the box of $G_{2x}$ containing it.
We ``elect'' leaders in $G_{2x}$ in four phases
$F_1,\ldots,F_4$. Phase $F_i$ is just the execution of
DilutedTransmit($A,x,d$) for $d=(d_\alpha(n)/\lambda)^{1/\alpha}$ and
$A$ equal to the
set of leaders with label $i$ (see Proposition~\ref{prop:lead1}).
%
%
Therefore, each leader from $A$ can hear messages of all other (at most)
three leaders located in the same box of $G_{2x}$. Then, for a box $C$ of $G_{2x}$, the leader with the
smallest label (if any) among leaders of the four sub-boxes of $C$ becomes the leader of $C$.

Finally, complexity bound stated in the proposition follows directly from Proposition~\ref{prop:lead1}
and inequality $\alpha\geq 2$.
\qed 

}

\comment{

\vspace*{2ex}
\noindent\textbf{Proof of Theorem~\ref{t:leader:gran}:} 
Correctness \tj{of GranLeaderElection} follows from properties of LeadIncrease and choice of parameters
(see Proposition~\ref{prop:lead2}). Proposition~\ref{prop:lead2} and the choice of $x$
\tj{in line 1. of GranLeaderElection directly
imply the bound $O(\frac{d_{\alpha}(n)\log (gz)}{\lambda})$. }
However, all but
the last execution of LeadIncrease is called with $\lambda\geq 1/2$ which gives
the result.
\qed 

} 

\section{Deferred Details from the Analysis of GenLeaderElection: Proof of Theorem~\ref{t:leader:general}}

\comment{

First, we provide the
pseudo-code of the leader election algorithm (Algorithm~\ref{alg:leader}) and 
then its correctness and complexity are formally
analyzed. All references to boxes in the algorithm regard boxes of $G_z$.
\begin{algorithm}[H]
	\caption{GenLeaderElection($V,z$)}
	\label{alg:leader}
	\begin{algorithmic}[1]
    \State For each $v\in V$: $cand(v)\gets true$; 
    \For{$i=1,\ldots,\log n+1$}\Comment{Elimination}
        \For{$j,k\in[0,2]$}
            \State Execute $S$ twice on the set: 
            \State \ \ $\{w\in V\,|\,cand(w)=true, w\in C_z(j',k')$
            \State \ \ \mbox{ such that } $(j',k')\equiv(j,k)\mod 2\}$;
            \State Each $w\in V$ determines and stores $X_w$ during
            \State \ \ the first execution of $S$ and $X_v$ for each
            \State \ \ $v\in X_w$ during the second execution of $S$,
            \For{each $v\in V$}
                \State $u\gets \min(X_v)$
                \If{$X_v=\emptyset$ or $v>\min(X_u\cup\{u\})$}
                    \State $cand(v)\gets false$; $ph(v)\gets i$
                \EndIf
            \EndFor
        \EndFor
    \EndFor

    \State For each $v\in V$: $state(v)\gets active$ \Comment{Selection}
    \For{$i=\log n,(\log n)-1,\ldots,2,1$}
        \State $A_i\gets \{v\in V\,|\, ph(v)=i, state(v)=active\}$
        \State $V_i\gets$ GranLeaderElection($A_i,n/z,z$)
        \State $\lambda\gets 1-\sqrt{2}z$ \tj{\Comment{$V_i$ -- new leaders}}
        \State For each $v\in V_i$: $state(v)\gets leader$ 
        \State DilutedTransmit($V_i,z,d$) for $d=(d_\alpha(n)/\lambda)^{1/\alpha}$ 
        \tj{\State\ \ \ (see Proposition~\ref{prop:lead1})}
        \State For each $v\in V$ which can hear $u\in\boxx(v)$ 
        \tj{\State during DilutedTransmit($V_i,z,d$): $state(v)\gets passive$}
    \EndFor
    \end{algorithmic}
\end{algorithm}

}

First, we analyze some properties of communication in the SINR model which eventually will
justify application of strongly selective families \tj{in GenLeaderElection}.

\begin{proposition}
\labell{prop:for:sel}
For each $\alpha>2$ and $\lambda<1$, there exists a constant $d$, which depends
only on $\alpha$,
satisfying the following property.
Let $W$ be a set of stations such that
$\min_{u,v\in W}\{\dist(u,v)\}=\sqrt{2}x\leq (1-\lambda)\}$
(i.e., in particular,
there is at most one station from $W$ in each box
of $G_x$). 
Let $u,v$ be the pair of closest stations, i.e., $\dist(u,v)=\sqrt{2}x$
and let $u\in C$, where $C$ is a box of $G_x$.
If $u$ is transmitting in a round $t$ and no other station
in any box $C'$ of $G_x$ in the max-distance at most $d/\lambda^{1/(\alpha-2)}$ from $C$ is transmitting at that round,
then $v$ can hear the message from $u$ at round $t$.
\end{proposition}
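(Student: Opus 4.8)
The plan is to reuse the interference-accumulation argument of Proposition~\ref{prop:lead1}, but replacing the global $d$-dilution of the transmitters by the \emph{silencing} of a ball of boxes of radius $R=d/\lambda^{1/(\alpha-2)}$ around $C$. First I would lower-bound the useful signal: since $\dist(u,v)=\sqrt{2}x$, the power that $v$ receives from $u$ equals $P(\sqrt{2}x)^{-\alpha}$. The reception condition $SINR(u,v,\mathcal{T})\ge\beta$ is then $P(\sqrt{2}x)^{-\alpha}\ge\beta I$, where $I$ is the total of ambient noise and interference at $v$, so the whole proof reduces to an upper bound on $I$.

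Next I would bound $I$ geometrically. By hypothesis every transmitter other than $u$ lies in a box $C''$ of $G_x$ with $\distM(C,C'')>R$, and since the minimum pairwise distance in $W$ is $\sqrt{2}x$, each such box holds at most one transmitter; the number of boxes at max-distance exactly $j$ from $C$ is $O(j)$ (a ring of perimeter $O(j)$, as in Figure~\ref{fig:gran}). Moreover $\dist(u,v)=\sqrt{2}x<2x$ forces $\distM(\boxx(v),C)\le1$ (a box at max-distance $\ge2$ would be at Euclidean distance $\ge2x$ from $u$), so the triangle inequality for $\distM$ gives $\dist(v,w)\ge(j-1)x$ for any interferer $w$ in a box at max-distance $j$ from $C$. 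Summing over $j>R$ and using $P=\beta\cN$, I would obtain
\[
I \;\le\; \cN + \sum_{j>R}O(j)\cdot\frac{P}{((j-1)x)^{\alpha}} \;\le\; \cN+\frac{c_1 P}{x^{\alpha}}\sum_{j>R}j^{1-\alpha} \;\le\; \cN+\frac{c_1 P}{(\alpha-2)\,x^{\alpha}}\,R^{2-\alpha},
\]
where the tail estimate $\sum_{j>R}j^{1-\alpha}\le\int_R^\infty t^{1-\alpha}\,dt=R^{2-\alpha}/(\alpha-2)$ uses $\alpha>2$, and $c_1$ absorbs the $O(\cdot)$ constant together with the harmless replacement of $j-1$ by $j/2$.

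The decisive step, which I expect to be the crux, is the tuned radius $R=d/\lambda^{1/(\alpha-2)}$: it makes $R^{2-\alpha}=d^{2-\alpha}\lambda$, so the interference bound carries a factor $\lambda$. On the signal side, the hypothesis $\sqrt{2}x\le1-\lambda$ gives $(\sqrt{2}x)^{\alpha}\le(1-\lambda)^{\alpha}\le1-\lambda$, hence $1-(\sqrt{2}x)^{\alpha}\ge\lambda$. Writing the reception condition as $P(\sqrt{2}x)^{-\alpha}\ge\beta I$ and using $\beta\cN=P$, it becomes
\[
P\big((\sqrt{2}x)^{-\alpha}-1\big)\;\ge\;\frac{c_1\beta P}{(\alpha-2)\,x^{\alpha}}\,R^{2-\alpha},
\]
and the left-hand side is at least $P\lambda(\sqrt{2})^{-\alpha}x^{-\alpha}$. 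Substituting $R^{2-\alpha}=d^{2-\alpha}\lambda$ and cancelling the common factors $P$, $x^{-\alpha}$ and $\lambda$ collapses the requirement to $d^{\alpha-2}\ge c_1\beta(\sqrt{2})^{\alpha}/(\alpha-2)$, whose right-hand side depends only on $\alpha$ and the fixed parameter $\beta$. Choosing $d$ to be (at least) this constant finishes the argument. The real obstacle is not any new idea beyond the engineered radius $R$, but verifying this cancellation cleanly: keeping the shift from $C$ to $\boxx(v)$ and the $(j-1)$-versus-$j$ bookkeeping honest so that the $\lambda$ in the interference exactly matches the $\lambda$ in the signal slack.
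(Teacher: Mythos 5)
Your proposal is correct and follows essentially the same route as the paper's own proof: lower-bound the received signal by $P(\sqrt{2}x)^{-\alpha}$, upper-bound interference by ring-counting the $O(j)$ occupied boxes at max-distance $j$ with the tail estimate $\sum_{j>R}j^{1-\alpha}\le R^{2-\alpha}/(\alpha-2)$, and use $1-(\sqrt{2}x)^{\alpha}\ge\lambda$ together with $R^{2-\alpha}=d^{2-\alpha}\lambda$ to cancel $\lambda$ and reduce everything to a constant lower bound on $d$. If anything, your bookkeeping is slightly more careful than the paper's (which silently sets interferer power to $1$ and ignores the shift between $C$ and $\boxx_x(v)$), and the residual off-by-one in your $\dist(v,w)\ge(j-1)x$ estimate (the shift actually gives $(j-2)x$) is harmlessly absorbed into $c_1$ exactly as you anticipate.
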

\begin{proof}
Recall that, according to our assumptions, $P=\beta\cN$ and $\sqrt{2}x\leq 1-\lambda$.
The power of signal from $u$ received by $v$ is then
$$S=\frac{P}{(\sqrt{2}x)^\alpha}=\frac{\beta\cN}{(\sqrt{2}x)^\alpha}$$
Assuming that no station
in any box $C'$ in the max-distance at most $d$ from $C$ is transmitting,
the amount of interference and noise at $v$ is at most
$$
I\leq\cN+\sum_{i=d}^{\infty}8(i+1)\cdot\frac{1}{(ix)^{\alpha}}
=
\cN+\frac{8}{x^{\alpha}}\cdot e_d
\ ,
$$
where $e_d=\sum_{i=d+1}^{\infty}i^{1-\alpha}$. 
Thus, it is sufficient to show that $S\geq \beta I$, i.e.,
$$\frac{\beta\cN}{(\sqrt{2}x)^\alpha}\geq \beta(\cN+\frac{8}{x^\alpha}\cdot e_d)$$
which is equivalent to
\tj{$$e_d\leq \frac{\cN(1-(\sqrt{2}x)^\alpha)}{8\cdot 2^{\alpha/2)}}$$}
Since $\sqrt{2}x\leq 1-\lambda$ (and therefore $1-(\sqrt{2}x)^\alpha\geq$ $1-(1-\lambda)^\alpha\geq$ $\lambda$, the above inequality is satisfied for
each $e_d$ such that
$$e_d\leq \frac{\cN\lambda}{8\cdot 2^{1/(2\alpha)}}$$
which holds for sufficiently large $d$, because of convergence of $\sum_{i=1}^{\infty}1/i^{\alpha}$ for $\alpha>2$.

\tj{More precisely, $e_d=\sum_{i=d+1}^{\infty}i^{1-\alpha}\leq\frac{d^{2-\alpha}}{\alpha-2}$
due to the fact that $\sum_{i=d+1}^{\infty}i^{1-\alpha}\leq \int_d^{\infty}i^{1-\alpha}$.
Thus, it is sufficient that the following inequality is satisfied
$$d\geq \left(\frac{8\cdot 2^{\alpha/2}}{\cN\lambda(\alpha-2)}\right)^{\frac{1}{\alpha-2}}.$$
}
\end{proof}

\tj{\begin{corollary}\labell{cor:selector}
For each $\alpha>2$ and $\lambda<1/3$, there exists a constant $k$ 
satisfying the following property.
Let $W$ be a set of stations such that
$\min_{u,v\in W}\{\dist(u,v)\}=\sqrt{2}x$ and let $\dist(u,v)=\sqrt{2}x$
for some $u,v\in W$.
Then, $v$ can hear the message from $u$ during an execution of a $(\cI,k)$-ssf on $W$.
\end{corollary}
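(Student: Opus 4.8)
The plan is to reduce the corollary to Proposition~\ref{prop:for:sel} by a simple packing argument. That proposition tells me that if $u$ transmits while every \emph{other} station of $W$ whose box of $G_x$ lies at max-distance at most $R:=d/\lambda^{1/(\alpha-2)}$ from $\boxx_x(u)$ stays silent, then $v$ hears $u$; the transmitters farther than $R$ are already absorbed into the interference bound of that proposition, which uses only the one-station-per-box structure of $W$ and therefore holds no matter which faraway stations happen to transmit. So all I really need is a single round of the ssf execution in which $u$ transmits while all its nearby competitors are silent.

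First I would fix $R$ as above. Since $\alpha$ and $\lambda<1/3$ are constants, $R$ is a constant depending only on $\alpha$ and $\lambda$. Let $Z$ be the set of IDs of those stations of $W$ whose box of $G_x$ is within max-distance $R$ of $\boxx_x(u)$. Because $\min_{u,v\in W}\dist(u,v)=\sqrt{2}x$, each box of $G_x$ contains at most one station of $W$, and the number of boxes within max-distance $R$ of a fixed box is $O(R^2)$; hence $|Z|\le k$ for a suitable constant $k=k(\alpha,\lambda)=O(R^2)$. I would take this $k$ as the parameter of the $(\cI,k)$-ssf~$S$.

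Next I would invoke the defining property of $S$: since $|Z|\le k$ and the ID of $u$ lies in $Z$, there is a member $S_i$ of the family with $S_i\cap Z=\{\,\text{id}(u)\,\}$. In the round of the execution corresponding to $S_i$, station $u$ transmits and every other station whose box is within max-distance $R$ of $\boxx_x(u)$ is silent. Crucially, the receiver $v$ itself belongs to $Z$: it is at Euclidean distance $\sqrt{2}x$ from $u$, so its box is $\boxx_x(u)$ or an adjacent box, at max-distance at most a small constant $\le R$ (here I use that the constant $d$ from Proposition~\ref{prop:for:sel} may be taken large enough that $R\ge 2$). Hence the ID of $v$ is excluded from $S_i$, so $v$ does not transmit and acts as a receiver in this round, the silence hypothesis of Proposition~\ref{prop:for:sel} is met, and $v$ hears $u$.

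The only points needing care --- and the main (modest) obstacle --- are the two bookkeeping facts that make $k$ a genuine constant and that place the receiver inside the isolated region: (i) the packing estimate that at most $O(R^2)$ stations of $W$ fall within max-distance $R$, which rests on the one-station-per-box property guaranteed by $\min_{u,v\in W}\dist(u,v)=\sqrt{2}x$; and (ii) the verification that $v$'s box lies within max-distance $R$ of $\boxx_x(u)$, so that isolating $u$ among the nearby stations automatically silences $v$. Both follow directly once $R$ is fixed, and together with Proposition~\ref{prop:for:sel} they complete the argument.
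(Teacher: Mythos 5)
Your proof is correct and takes essentially the same route as the paper's own: the paper likewise fixes the exclusion radius $d'$ from Proposition~\ref{prop:for:sel}, uses the one-station-per-box property to bound the number of stations within that radius by the constant $(2d'+1)^2$, takes this as the ssf parameter, and extracts a round in which $u$ transmits while all nearby stations are silent, concluding via Proposition~\ref{prop:for:sel}. Your explicit verification that the receiver $v$ itself lies in the isolated set $Z$ (hence is silent and able to receive) is left implicit in the paper but is the same argument.
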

\begin{proof}
Let $d'=\frac{d}{\lambda^{\alpha-2}}$, where $d$ is the constant from Proposition~\ref{prop:for:sel}. 
Let $u,v\in W$ be as stated in the corollary and let $l=(2d'+1)^2$. Conditions of the corollary
imply that there is at most one station from $W$ in each box of $G_x$. Let $S$ be
$(\cI,l)$-ssf. Thus, during execution of $S$ for $|S|$ rounds, there exists a round $t$ in which
$u$ send a message and no other station in any box at the max-distance at most $d'$ from
$\boxx_x(u)$ sends a message. Proposition~\ref{prop:for:sel} implies that $v$ can hear
$u$ in round $t$.
\end{proof}
}
While Corollary~\ref{cor:selector} says that a pair of closest station can exchange messages
during execution of $(\cI,k)$-ssf, the following proposition generalizes this result by guaranteeing
that, simultaneously, the closest pair of stations in each box $C$ can exchange messages, provided
there are close enough stations in $C$.
\begin{proposition}\labell{prop:closer}
For each $\alpha>2$ and $\lambda<1/3$, there exists a constant $k$ 
satisfying the following property.
Let $z\leq (1-\lambda)/\sqrt{2}$, let $W$ be a $d$-diluted for $d\geq 3$ wrt $G_z$  set of stations and let
$C$ be a box of $G_z$. Moreover, let
$\min_{u,v\in W, \boxx_z(u)=\boxx_z(v)=C}\{\dist(u,v)\}=\sqrt{2}x\leq z/n$ and $\dist(u,v)=\sqrt{2}x$
for some $u,v\in W$ such that $\boxx_z(u)=\boxx_z(v)=C$.
Then, $v$ can hear the message from $u$ during an execution of a $(\cI,k)$-ssf on $W$.
\end{proposition}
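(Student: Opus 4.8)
The plan is to reduce to the single-box analysis behind Corollary~\ref{cor:selector} (equivalently Proposition~\ref{prop:for:sel}), treating the stations inside $C$ exactly as in that argument, and to show that the stations lying in other boxes of $G_z$ contribute only a negligible amount of interference thanks to the $d$-dilution together with the scale separation $x\le z/n$. Throughout I work with the fine grid $G_x$: since the closest pair inside $C$ is at distance $\sqrt 2 x$, the set $W\cap C$ has at most one station per box of $G_x$.

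First I would fix a constant $d'$ (depending only on $\alpha$ and $\lambda$, as in Proposition~\ref{prop:for:sel}) and set $k=(2d'+1)^2$, using a $(\cI,k)$-ssf $S$. For the isolation step, consider the block $B$ of $(2d'+1)^2$ boxes of $G_x$ within $G_x$-max-distance $d'$ of $\boxx_x(u)$. Because $d'$ is constant while $x\le z/n$, the block has Euclidean diameter $O(d'x)=O(d'z/n)\ll z$, so $B$ can meet only boxes of $G_z$ adjacent to $C$; by $d$-dilution with $d\ge 3$ those boxes contain no station of $W$. Hence every station of $W$ in $B$ actually lies in $C$, and there are at most $(2d'+1)^2=k$ of them. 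Applying the selectivity of $S$ to this set and to the element $u$, there is a round $t$ in which $u$ transmits while no other station of $B$ does; in particular $v\in B$ (as $\dist(u,v)=\sqrt 2 x$ and $d'\ge 2$) is silent and can act as a receiver.

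Next I would verify the SINR condition at $v$ in round $t$. The signal is $S_{\mathrm{sig}}=P/(\sqrt 2 x)^\alpha$. I split the interference as $I=I_{\mathrm{in}}+I_{\mathrm{out}}$, where $I_{\mathrm{in}}$ comes from stations of $C$ outside $B$ and $I_{\mathrm{out}}$ from stations in the remaining occupied boxes of $G_z$. The term $I_{\mathrm{in}}$ is bounded exactly as in Proposition~\ref{prop:for:sel}: one station per box of $G_x$, at most $8(i+1)$ boxes at max-distance $i$, giving $I_{\mathrm{in}}\le \frac{8P}{x^\alpha}\,e_{d'}$ with $e_{d'}=\sum_{i>d'} i^{1-\alpha}$, which is made as small as needed by taking $d'$ large. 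For $I_{\mathrm{out}}$ I would use the dilution frame-count at the coarse scale: the occupied boxes of $G_z$ at $G_z$-max-distance $id$ from $C$ number at most $8(i+1)$, each holds at most $n$ stations, and each such station is at Euclidean distance at least $(id-1)z$ from $v$; summing yields $I_{\mathrm{out}}=O\!\big(nP/(z^\alpha d^\alpha)\big)$.

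The decisive comparison is $I_{\mathrm{out}}$ against the signal: since $x\le z/n$ we have $S_{\mathrm{sig}}\ge P\,n^\alpha/(\sqrt 2\, z)^\alpha$, so $I_{\mathrm{out}}/S_{\mathrm{sig}}=O\!\big(1/(d^\alpha n^{\alpha-1})\big)$, which tends to $0$ because $\alpha>2$; thus the out-of-box contribution is negligible for every $d\ge 3$, with no need to silence those faraway boxes at all. Combining with $I_{\mathrm{in}}$ and using $P=\beta\cN$ and $\sqrt 2 x\le 1-\lambda$, the inequality $S_{\mathrm{sig}}\ge \beta(\cN+I_{\mathrm{in}}+I_{\mathrm{out}})$ holds once $d'$ is a sufficiently large constant, so $SINR(u,v,\cT)\ge\beta$ and $v$ hears $u$ in round $t$. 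The main obstacle, and the only genuinely new point beyond Proposition~\ref{prop:for:sel}, is controlling $I_{\mathrm{out}}$: one must exploit simultaneously the coarse-scale dilution (to count and distance the faraway occupied boxes) and the scale gap $x\le z/n$ together with $\alpha>2$ (to beat the factor $n$ coming from up to $n$ stations in each distant box).
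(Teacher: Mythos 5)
Your proof is correct in substance, but it takes a genuinely different route from the paper's. The paper never computes the out-of-box interference at all: it handles the stations $W'=W\setminus C$ by a \emph{virtual relocation} argument. All stations of $W'$ are conceptually moved into the boxes adjacent to $C$, keeping the global minimum distance equal to $\sqrt{2}x$ (feasible because $\sqrt{2}x\le z/n$ leaves room to pack at most $n$ points); since $W$ is $3$-diluted, every station of $W'$ originally sits at Euclidean distance at least $3z$ from $C$, strictly more than the at most $2\sqrt{2}z$ it has after relocation, so moving interferers only increases interference --- and the transmission schedule of an ssf execution depends only on IDs, not positions. Hence hearing in the relocated configuration, which is exactly the hypothesis of Corollary~\ref{cor:selector}, implies hearing in the original one, and the proposition follows as a black-box reduction. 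You instead keep the stations in place and reopen the SINR computation with a two-scale split: the ssf silences the constant-size $G_x$-block $B$ around $u$ (which, by dilution, contains only stations of $C$), the residual in-box interference is bounded as in Proposition~\ref{prop:for:sel}, and the faraway interference $I_{\mathrm{out}}$ is beaten directly by the scale gap, $I_{\mathrm{out}}/S_{\mathrm{sig}}=O\bigl(1/(d^{\alpha}n^{\alpha-1})\bigr)$. Your version is longer but more quantitative: it makes explicit that the coarse dilution is needed only to empty the boxes adjacent to $C$, that the distant boxes need no silencing whatsoever, and exactly where $x\le z/n$ and $\alpha>2$ enter; the paper's version is shorter and avoids all new estimates, at the price of the packing-feasibility and monotonicity observations.

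One loose end in your write-up: the containment of $B$ in the $3\times 3$ block of $G_z$-boxes around $C$ needs $(2d'+3)x\lesssim z$, which follows from $x\le z/(\sqrt{2}n)$ only when $n$ exceeds a constant of order $d'$. This is harmless: if $n=O(d')$, then $|W|\le n\le k$, so the ssf yields a round in which $u$ transmits completely alone, and $SINR=P/\bigl(\cN(\sqrt{2}x)^{\alpha}\bigr)\ge\beta$ since $P=\beta\cN$ and $\sqrt{2}x\le 1-\lambda<1$. With that one-line patch (of the same rigor level as the paper's own treatment of edge cases), your argument is complete.
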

\begin{proof}
Let $u,v$ and $x$ be as specified in the proposition and let $C=\boxx_z(u)=\boxx_z(v)$ be
a box of $G_z$.
Let $S$ be a $(\cI,k)$-ssf.
If all stations
from $W$ are located in $C$, then the claim follows directly from Corollary~\ref{cor:selector}.
So, let $W'$ be the set of all elements of $W$ which are {\em not} located in $C$.
Let us (conceptually) ``move'' all stations from $W'$ to boxes adjacent to $C$,
preserving the invariant that $\min_{u,v\in W, \boxx(u)=\boxx(v)=C}\{\dist(u,v)\}=x$. Note that such
a movement is possible, since there are at most $n$ stations in $W'$ and the side of a box of the
grid $G_z$ is larger
than $1/2$. Since $W$ is $3$-diluted, the distance from $w\in C$ to any station $w'\in W'$ before
movement of $w'$ is larger than the distance from $w$ to $w'$ after movement.
Let $W''$ define $W$ with new locations of stations (after movements).
Therefore, if $u$ can
hear $v$ in the execution of $S$ on $W''$ (i.e., after movements of stations), it can hear $v$
in the execution of $S$ on $W$ (i.e., with original placements of stations).
However, the fact that $u$ can hear $v$ on $W''$ follows directly from \tj{Corollary~\ref{cor:selector} by the fact that
$\min_{u,v\in W''}\{\dist(u,v)\}=x$}.
\end{proof}

Next, we concentrate on properties of Algorithm~\ref{alg:leader}.
Recall the notation: $V(l)=\{v\,|\, ph(v)>l\}$ and $V_C(l)=\{v\,|\, ph(v)>l\mbox{ and } \boxx(v)=C\}$, for $l\in\NAT$
and $C$ being a box of $G_z$.

\begin{proposition}\labell{prop:lead:empty}
Let $C$ be a box of $G_z$ for $z\leq (1-\lambda)/\sqrt{2}$, $\delta<1/3$ and $l\in\NAT$.
Then,
\begin{enumerate}
\item[(i)]
$|V_C(l+1)|\leq |V_C(l)|/2$;
\item[(ii)]
If $V_C(l+1)$ is empty, then the smallest distance between elements of $V_C(l)$ is larger
than $z/n$.
\end{enumerate}
\end{proposition}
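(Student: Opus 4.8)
The plan is to isolate the one geometric input the statement really needs --- that the same-box ``hearing'' relation produced by the execution of $S$ in block $l+1$ is effectively symmetric --- and then derive both parts by short combinatorial arguments on the resulting hearing graph.

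First I would make the survival predicate explicit. A station $v$ is eliminated in block $l+1$ exactly when $X_v=\emptyset$ or $v>\min(X_u\cup\{u\})$ for $u=\min(X_v)$; equivalently $v$ survives iff $X_v\neq\emptyset$ and $v\le\min(X_u\cup\{u\})$. Since $u\in X_u\cup\{u\}$ forces $\min(X_u\cup\{u\})\le u=\min(X_v)$, every survivor satisfies $v\le\min(X_v)$, i.e.\ it hears nobody with a smaller ID. The crucial geometric step is then to show that hearing is symmetric on same-box candidates: if $u\in X_v$ then $v\in X_u$. I would obtain this from the isolation property of the strongly-selective family together with uniform powers --- Proposition~\ref{prop:for:sel} and Corollary~\ref{cor:selector} guarantee that two sufficiently close candidates in a box mutually hear each other during an execution of $S$, while a pair too far to meet the SINR threshold fails in both directions, because the received power $P\,\dist(u,v)^{-\alpha}$ is identical either way. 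I expect this symmetry to be the main obstacle, and it is genuinely necessary: a purely combinatorial ``chain'' pattern (each station hearing only the next larger ID) would leave three of four candidates surviving and break part~(i), so symmetry cannot be dispensed with.

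Granting symmetry, part~(i) becomes an injection argument. For a survivor $v$ set $u(v)=\min(X_v)$; as noted $u(v)>v$, and by symmetry $v\in X_{u(v)}$, so $\min(X_{u(v)})\le v<u(v)$, which makes $u(v)$ hear a smaller ID and hence be eliminated. The map $v\mapsto u(v)$ is injective on survivors: if $u(v_1)=u(v_2)=u$ with $v_1<v_2$, then $v_1\in X_u$ forces $\min(X_u\cup\{u\})\le v_1<v_2$, contradicting the survival of $v_2$. Thus the survivors inject into the eliminated stations of $V_C(l)$, giving $|V_C(l+1)|\le|V_C(l)|/2$.

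For part~(ii) I would argue the contrapositive: if the minimum distance inside $V_C(l)$ is at most $z/n$, some station survives. Let $(a,b)$ with $a<b$ be a closest pair in $C$; since the candidate set is diluted with respect to $G_z$ (the residue-class splitting in the Elimination loop ensures this) and $z\le(1-\lambda)/\sqrt{2}$, Proposition~\ref{prop:closer} gives that $a$ and $b$ hear each other, so the hearing graph on $V_C(l)$ has at least one edge. Let $w$ be the smallest-ID station with $X_w\neq\emptyset$. Using symmetry one checks that $\min(X_w)\ge w$ (a strictly smaller heard station would itself have nonempty $X$-set, contradicting minimality of $w$), and that for $u=\min(X_w)>w$ every element of $X_u\cup\{u\}$ again has nonempty $X$-set and is therefore $\ge w$, while $w\in X_u$ by symmetry; hence $\min(X_u\cup\{u\})=w$ and $w$ survives. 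Consequently $V_C(l+1)\neq\emptyset$, which is the contrapositive of~(ii). The granularity bound for $V_C(l_C^{\star})$ then follows by instantiating~(ii) at $l=l_C^{\star}$.
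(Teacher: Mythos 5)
Your combinatorial skeleton is essentially the paper's own proof. For part (i), the paper observes that the elimination rule implicitly builds a matching on the mutual-hearing graph---pairs $(u,v)$ with $u=\min(X_v)$ and $v=\min(X_u\cup\{u\})$---and that every survivor is the smaller element of a matched pair whose larger element is eliminated; your injection $v\mapsto\min(X_v)$ is the same argument in different clothing. For part (ii), the paper likewise invokes Proposition~\ref{prop:closer} to get that the closest pair (at distance at most $z/n$) hears each other mutually, and asserts, in a parenthetical, exactly your claim that the smallest-ID station possessing a mutual-hearing partner survives.

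The one place where you go beyond the paper is also where your argument breaks: the symmetry of the hearing relation. You are right that it is indispensable---your chain counterexample is valid, and the paper itself silently uses symmetry when it claims every survivor is the smaller element of a matched pair (this needs $v\in X_u$ for $u=\min(X_v)$). But your justification of symmetry is not sound in the SINR model. Whether $u$ hears $v$ in a given round is not a function of the received power $P\,\dist(u,v)^{-\alpha}$ alone: it depends on the interference accumulated at the receiver, which differs at the two endpoints and, more importantly, differs across rounds, because the ssf schedules different transmitter sets in the rounds when $u$ transmits and in the rounds when $v$ transmits. So there is no dichotomy of the form ``close pairs hear each other mutually, far pairs fail in both directions'': a pair at intermediate distance can have one direction succeed and the other fail. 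Moreover, Proposition~\ref{prop:for:sel} and Corollary~\ref{cor:selector} guarantee hearing only for a pair realizing the minimum pairwise distance $\sqrt{2}x$, not for all ``sufficiently close'' pairs, so they cannot be bootstrapped into the general implication $u\in X_v\Rightarrow v\in X_u$. To be fair, the paper never establishes this symmetry either---it is a gap both arguments share---but since you promoted the symmetry claim to a load-bearing lemma and your proof of it does not go through, your write-up is incomplete at precisely the step you correctly identified as the crux.
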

\begin{proof}
Observe that our algorithm implicitly builds matchings
in the graph whose nodes are $V_C(l)$ and an edge connects such $u$ and $v$ that
$u$ can hear $v$ and $v$ can hear $u$ during an execution of $S$.
In other words, $(u,v)\in E$ when $x\in X_v$ and $v\in X_u$. Moreover,
a pair $(u,v)$ belongs to our matching when the following conditions
are satisfied:
 \begin{itemize}
 \item
 $u=\min(X_v)$
 \item
 $v=\min(X_u\cup\{u\})$
 \end{itemize}
and therefore also $v<u$. As one can see, each station $w$ can belong
to (at most) one such a pair which proves this set of pairs
forms a matching in $V_C(l)$ indeed.
Note that the station $v\in V_C(l)$ belongs to $V_C(l+1)$ only if
it is the smaller element of a pair belonging to our matching.
%
Therefore, the inequality $|V_C(l+1)|\leq |V_C(l)|$ holds. This yields
conclusion (i) of the lemma.

As for conclusion (ii), assume that $V_C(l)$ is not empty. Observe that $V_C(l+1)$ is
not empty if there exist $v,u\in V_C(l)$ such that $v$ can hear $u$
and $u$ can hear $v$ during execution of $S$.
(Indeed, $v\in V_C(l+1)$ for the smallest $v\in V_C(l)$ such that $v$ can hear $u$
and $u$ can hear $v$ for some $u\in V_C(l)$.) However, such $v$ and $u$ exist if the smallest
distance between elements of $V_C(l)$ is at most $\frac{z}{n}$ by Proposition~\ref{prop:closer}.
\end{proof}

Finally, we are ready to prove Theorem~\ref{t:leader:general}.
\\

\noindent\textbf{Proof of Theorem~\ref{t:leader:general}:}
Time complexity $O(\log n\log \cI)$ follows immediately from the bounds on the size of selectors
and complexity of GranLeaderElection.

Proposition~\ref{prop:lead:empty}(i) implies that $V_C(l)=\emptyset$ for each box $C$ and $l>\log n$.
(In other words, $ph(v)\leq\log n$ for each $v\in V$.)
For a box $C$ of $G_z$, let $l_{\star}=\max_l\{V_C(l)\neq\emptyset\}$.
By Proposition~\ref{prop:lead:empty}(ii), the smallest distance between stations of $V_C(l_{\star})$ is at least $z/n$.
In other words the smallest distance between stations of
$\{v\in V\,|\, ph(v)=l_{\star}, state(v)=active\}$ is $\geq z/n$, where $l_{\star}$ is the largest
number $l$ such that $ph(v)=l$ for some $v\in V$.

Let us focus on a box $C$ of $G_z$ which contains at least one station from $V$.
Selection stage 
tries to choose the leader of $C$ among
$V_C(\log n), V_C(\log n-1), \ldots$. Moreover, when the leader is elected, all stations
from $C$ are switched off (i.e., their state is set to passive which implies that they
do not attend further GranLeaderElection executions).
Since $l_{\star}=\max_l(V_C(l)\neq\emptyset)\leq\log n$
and the smallest distance between elements of $V_C(l_{\star})$ is $\geq z/n$, each execution of GranLeaderElection
is applied on a set of stations with the smallest distance between stations $\geq z/n$ 
\tj{implying granularity $\Theta(n/z)$}, and therefore the leader
in each box $C$ containing (at least one) element of $V$ is chosen.
\qed

\comment{

\subsection{Algorithm's pseudocode and proofs for deterministic broadcasting}

First, we present a pseudo-code of a stage of the broadcasting algorithm
DetBroadcast.
\begin{algorithm}[H]
	\caption{StageOfBroadcast}
	\label{alg:stage}
	\begin{algorithmic}[1]
    \State $\eps'\gets \eps/2$; $\gamma'\gets 1-\eps'$; $z\gets \eps'/\sqrt{2}$
    \State $l\gets \lceil \gamma'/\eps'\rceil$
    \State $V\gets$ stations in state {\em \activ}
    \State GenLeaderElection($V,z$) or GranLeaderElection($V,g,z$)
    \State $V'\gets$ leaders chosen during leader election
    \For{each $(a,b)\in[0,l-1]^2$}
        \State $V'_{a,b}=\{v\in V'\,|\, G_z(v)\equiv (a,b)\mod l\}$
        \tj{\State $d\gets (d_{\alpha}(n)/\eps')^{1/\alpha}$ \Comment{$d_{\alpha}$ from Prop.~\ref{prop:lead1}}
        \State DilutedTransmit($V'_{i,a,b},(1-\eps')/(2\sqrt{2}),d$)}
     \EndFor
     \State for each $v$: if $state(v)={\activ}$: $state(v)\gets \passive$
     \State for each $v$: if $state(v)=\asleep$ and $v$ received the broadcast message: $state(v)\gets \activ$
    \end{algorithmic}
\end{algorithm}

Below, we formulate a essential fact for correctness of our broadcasting algorithm which
easily follows from the definition of a reachability graph.
\begin{proposition}\labell{prop:boxall}
Let $\eps'=\eps/2$ for $\eps<1$.
If a station $v$ from a box $C$ of a grid $G_x$ for $x\leq \eps/(2\sqrt{2})$ transmits a message
$(1-\eps')$-successfully then its message is received by all neighbors (in the
reachability graph) of all stations located in $C$.
\end{proposition}

Now, we argue that our algorithm finishes broadcasting after $O(D)$ stages.
Each station $v$ which receives the broadcast message for the
first time at stage $j$, changes its state from {\asleep} to {\activ} at the end of stage $j$.
Then, at the end of stage $j+1$, such station $v$ changes its state from {\activ} to {\passive}.
In each stage, only (and exactly) the stations in state {\activ} take part as transmitters
in leader election and DilutedTransmit.
Proposition~\ref{prop:boxall} guarantees that, if a station $v$ is in the state {\em \activ} during
stage $j$, then all its neighbors in the reachability graph receive the broadcast message 
during DilutedTransmit (in line 9 of StageOfBroadcast)
\tj{in stage $j$. 
Therefore all neighbors of $v$ (in the reachability graph) are in the state {\activ} in stage $j+1$ or earlier. This implies that broadcasting
is finished after $O(D)$ applications of Algorithm~\ref{alg:stage}.}
Theorem~\ref{t:leader:general} together with the fact that the size of $S$
is $O(\log \cI)$ and that $\lambda'/\eps'$  is constant (if $\eps$ is constant) imply Theorem~\ref{t:broadgen}.

\tj{As in Theorem~\ref{t:broadgran} the dependence on $\lambda$ is given explicitly, we 
evaluate its complexity in more detail.}

\noindent\textbf{Proof of Theorem~\ref{t:broadgran}:} 
Complexity of GranLeaderElection for $z=\eps'/\sqrt{2}$ is $O(d_\alpha(n)\log g)$, since $1-\sqrt{2}z$ is larger
than $1/3$ (see Theorem~\ref{t:leader:gran}). Then, $l=\Theta(1/\eps)$, and $d=\Theta((d_\alpha(n)/\eps)^{1/\alpha})$. Therefore,
the for-loop works in $O((1/\eps)^3 d_{\alpha}(n))$ rounds.\qed

}

\section{Transforming Algorithms to the SINR Model with Random Disturbances}

\TRANSFORM

\end{document}
